\DeclareMathAlphabet{\mathbbold}{U}{bbold}{m}{n}
\newcommand{\ra}[1]{\renewcommand{\arraystretch}{#1}}
\newcommand{\mtiny}[1]{{\scalebox{.75}{#1}}}
\newcommand{\smtiny}[1]{{\scalebox{.63}{#1}}}
\newcommand*{\supOp}{\operatornamewithlimits{sup}\limits}
\newcommand*{\esssup}{\operatornamewithlimits{ess\,sup}}
\newcommand*{\CupOp}{\operatornamewithlimits{\text{\scalebox{1.25}{$\cup$}}}\limits}
\newcommand{\Sum}{\sum\limits}
\newcommand{\GP}{\mathcal{G\!P}}
\newcommand{\tr}{{\smtiny{$\mathsf{T}$ }}\!}
\newcommand{\zero}{\mathbf{0}}
\newcommand{\onefun}{\mathbbm{1}}
\newcommand{\vc}[1]{{ \mathrm{#1} }}
\newcommand{\mx}[1]{{ \mathrm{#1} }}
\newcommand{\drm}{\mathrm{d}}
\newcommand{\inner}[2]{{ \langle {#1,#2} \rangle}}
\newcommand{\nth}{{\text{\tiny{th}}}}
\renewcommand{\emptyset}{\varnothing}
\newcommand{\expe}{\mathrm{e}}
\newcommand{\floor}[1]{\lfloor #1\rfloor}
\newcommand{\ellone}{\ell^{1}}
\newcommand{\ellinfty}{\ell^{\infty}}
\newcommand{\Lone}{L^{1}}
\newcommand{\Linfty}{L^{\infty}}
\newcommand{\Lscrone}{\Lscr^{1}}
\newcommand{\Lscrinfty}{\Lscr^{\infty}}
\newcommand{\Lscrp}{\Lscr^{p}}
\newcommand{\Bscr}{{\mathscr{B}}}
\newcommand{\Fscr}{{\mathscr{F}}}
\newcommand{\Gscr}{{\mathscr{G}}}
\newcommand{\Hscr}{{\mathscr{H}}}
\newcommand{\Lscr}{{\mathscr{L}}}
\newcommand{\Mscr}{{\mathscr{M}}}
\newcommand{\Sscr}{{\mathscr{S}}}
\newcommand{\Fcal}{{\mathcal{F}}}
\newcommand{\Ical}{{\mathcal{I}}}
\newcommand{\Lcal}{{\mathcal{L}}}
\newcommand{\Ncal}{{\mathcal{N}}}
\newcommand{\Rcal}{{\mathcal{R}}}
\newcommand{\Xcal}{{\mathcal{X}}}
\newcommand{\Ycal}{{\mathcal{Y}}}
\newcommand{\Cbb}{{\mathbb{C}}}
\newcommand{\Ebb}{{\mathbb{E}}}
\newcommand{\Nbb}{{\mathbb{N}}}
\newcommand{\Pbb}{{\mathbb{P}}}
\newcommand{\Rbb}{{\mathbb{R}}}
\newcommand{\Tbb}{{\mathbb{T}}}
\newcommand{\Xbb}{{\mathbb{X}}}
\newcommand{\Ybb}{{\mathbb{Y}}}
\newcommand{\Zbb}{{\mathbb{Z}}}
\newcommand{\bfalpha}{\boldsymbol{\alpha}}
\newcommand{\bflambda}{\boldsymbol{\lambda}}
\theoremstyle{plain}
\newtheorem{theorem}{Theorem}
\newtheorem{definition}{Definition}
\newtheorem{corollary}[theorem]{Corollary}
\newtheorem{lemma}[theorem]{Lemma}
\edef\endfrontmatter{%
	\unexpanded\expandafter{\endfrontmatter}
	\noexpand\endNoHyper 
}
\newcommand{\vcb}{\vc{b}}
\newcommand{\vcg}{\vc{g}}
\newcommand{\vcm}{\vc{m}}
\newcommand{\vcs}{\vc{s}}
\newcommand{\vcu}{\vc{u}}
\newcommand{\vcv}{\vc{v}}
\newcommand{\mxL}{\mx{L}}
\newcommand{\Hilbert}{\Hscr}
\newcommand{\Banach}{\Bscr}
\newcommand{\kernel}{\mathds{k}}
\newcommand{\kernelh}{\mathds{h}}
\newcommand{\Hk}{\Hilbert_\mathbbm{k}}
\newcommand{\TC}{\text{\mtiny{$\mathrm{TC}$}}}
\newcommand{\DI}{\text{\mtiny{$\mathrm{DI}$}}}
\newcommand{\DC}{\text{\mtiny{$\mathrm{DC}$}}}
\renewcommand{\SS}{\text{\mtiny{$\mathrm{SS}$}}}
\newcommand{\iTC}{\text{\mtiny{$\mathrm{iTC}$}}}
\newcommand{\iSS}{\text{\mtiny{$\mathrm{iSS}$}}}
\newcommand{\iTS}{\text{\mtiny{$\mathrm{iTS}$}}}
\newcommand{\AMLS}{\text{\mtiny{$\mathrm{AMLS}$}}}
\newcommand{\st}{\text{\mtiny{$\mathrm{st}$}}}
\newcommand{\kernelTC}{\kernel_{\TC}}
\newcommand{\kernelDI}{\kernel_{\DI}}
\newcommand{\kernelDC}{\kernel_{\DC}}
\newcommand{\kernelSS}{\kernel_{\SS}}
\newcommand{\kerneliTC}{\kernel_{\iTC}}
\newcommand{\kerneliSS}{\kernel_{\iSS}}
\newcommand{\kerneliTS}{\kernel_{\iTS}}
\newcommand{\kernelAMLS}{\kernel_{\AMLS}}
\newcommand{\kernelst}{\kernel_{\st}}
\newcommand{\kernelSI}{\kernel_{\text{\mtiny{$\mathrm{SI}$}}}}
\newcommand{\kernelSSn}[1]{\kernel_{\SS{#1}}}
\newcommand{\kernelRnE}[1]
{\kernel_{\text{\mtiny{$\mathrm{R}$}}#1\text{\mtiny{$\mathrm{E}$}}}}
\newcommand{\DSRI}{\text{\mtiny{$\mathrm{DSRI}$}}}
\newcommand{\DSRIclass}{\Sscr_{\DSRI}}
\newcommand{\DSRIvalue}{\Mscr}
\newcommand{\Sstable}{\Sscr_{\mathrm{s}}}
\newcommand{\Sfinitetrace}{\Sscr_{\mathrm{ft}}}
\newcommand{\Lu}[1]{\mx{L}^{\!\vc{u}}_{#1}}
\renewcommand{\epsilon}{\varepsilon}
\newcommand{\Jimage}{\mathrm{j}}
\newcommand{\Fw}{\Fcal_\omega}
\newcommand{\Fwr}{\mx{F}_\omega^{\text{\rm{(r)}}}}
\newcommand{\Fwi}{\mx{F}_\omega^{\text{\rm{(i)}}}}
\newcommand{\GOmega}{\Gscr_{\Omega}}
\newcommand{\GT}{\Gscr_{\Tbb}}
\begin{document}
\begin{frontmatter}
\title{Diagonally Square Root Integrable Kernels in System Identification\vspace{-9mm}}
\thanks[footnoteinfo]{This paper was not presented at any IFAC 	meeting. Corresponding author M.~Khosravi.}

\author[First]{Mohammad Khosravi}\ead{mohammad.khosravi@tudelft.nl},  
\author[Second]{Roy S. Smith}\ead{rsmith@control.ee.ethz.ch}
\address[First]{Delft Center for Systems and  Control, Delft University of Technology}  
\address[Second]{Automatic Control Laboratory, ETH Z\"urich}  

\begin{keyword}                          
system identification; kernel-based methods; diagonally square root integrable kernels; stable Gaussian processes
\end{keyword}                            
\begin{abstract}              
In recent years, the reproducing kernel Hilbert space (RKHS) theory has played a crucial role in linear system identification. The core of a RKHS is the associated kernel characterizing its properties. Accordingly, this work studies the class of diagonally square root integrable (DSRI) kernels. We demonstrate that various well-known stable kernels introduced in system identification belong to this category. Moreover, it is shown that any DSRI kernel is also stable and integrable. We look into certain topological features of the RKHSs associated with DSRI kernels, particularly the continuity of linear operators defined on the respective RKHSs. For the stability of a Gaussian process centered at a stable impulse response, we show that the necessary and sufficient condition is the diagonally square root integrability of the corresponding kernel. Furthermore, we elaborate on this result by providing proper interpretations.
\vspace{-1mm}
\end{abstract}
\end{frontmatter}

\section{Introduction}\label{sec:introduction}
The theory of reproducing kernel Hilbert spaces (RKHSs) was introduced   \cite{aronszajn1950theory} midway through the 
twentieth century. The intrinsic properties of RKHSs, their one-to-one relationship with the positive definite kernels, and their fundamental ties to the Gaussian processes offer a strong foundation for addressing various estimation and interpolation problems 
\cite{parzen1959statistical,wahba1990spline,cucker2002best,berlinet2011reproducing,khosravi2021Koopman}. 
Accordingly, they have become increasingly prevalent in statistics, signal processing, learning theory, and numerical analysis  \cite{kimeldorf1970correspondence,lukic2001stochastic,kanagawa2018gaussian,cuckerANDsmale2002mathematical}.
On the other hand, system identification has emerged as the theory and techniques for estimating suitable mathematical representations of dynamical systems using measurement data \cite{zadeh1956identification}, and remained an active field of research by developing numerous methodologies  \cite{ljung2010perspectives, schoukens2019nonlinear,khosravi2021ROA,ahmadi2020learning,khosravi2021grad}.

The RKHS theory is brought to the system identification area in  \cite{pillonetto2010new} by developing \emph{kernel-based system identification methods}.
As a result, a paradigm shift occurred in the system identification theory  \cite{ljung2020shift} by addressing issues of bias-variance trade-off, robustness, and model order selection  \cite{pillonetto2014kernel,chiuso2019system,khosravi2021robust},
unifying the identification of continuous-time systems and discrete-time systems  \cite{pillonetto2014kernel}, and allowing the inclusion of various side-information forms in the identification problem 
\cite{fujimoto2017extension,zheng2021bayesian,darwish2018quest,khosravi2019positive,prando2017maximum,fujimoto2018kernel,risuleo2019bayesian,everitt2018empirical,risuleo2017nonparametric,khosravi2021POS,khosravi2021SSG,khosravi2021FDI}. 
Furthermore, due to the inherent connection between RKHSs and Gaussian processes  \cite{wahba1990spline}, kernel-based methods offer a Bayesian interpretation of the system identification problem that allows quantifying the uncertainty and provides statistical guarantees  \cite{lataire2016transfer}.
Over the past decade, research on kernel-based system identification methods has received considerable attention and progressed significantly; nonetheless, it is still an ongoing field of research with various open problems and state-of-the-art results
 \cite{scandella2021kernel,pillonetto2019stable,bisiacco2020mathematical,pillonetto2021sample,bisiacco2020kernel,khosravi2022Lut}.

The building block of each RKHS is the associated kernel function. As a result, various attributes of the RKHS elements are inherited from the corresponding kernel. Therefore, it is necessary to introduce kernels suitable for system identification   \cite{dinuzzo2015kernels}. 
The most prevalent kernels in the literature include diagonal/correlated, tuned/correlated, stable spline, and their extensions, which are proposed primarily for the sake of impulse response stability and smoothness   \cite{zorzi2021second,chen2018continuous,andersen2020smoothing}. 
For improving the identification performance of complex systems, various ideas on designing kernels by combining multiple kernels are proposed  \cite{chen2014system,hong2018multiple-SURE,khosravi2020low,khosravi2020regularized}.
Influenced by machine learning, harmonic analysis of stochastic processes, linear system theory, and filter design techniques, further categories of kernels are developed \cite{chen2018kernel,zorzi2018harmonic,marconato2016filter}.
The significance of kernels led to the investigation of their more generic aspects,
e.g., the relation between the absolute summability of kernels and their stability is clarified in  \cite{bisiacco2020kernel}.  
Moreover, the link between various categories of kernels is studied in \cite{bisiacco2020mathematical}, where 
the mathematical foundations of stable kernels and their RKHSs are explored.
Furthermore, in  \cite{chiuso2019system}, it is shown that the realizations of a zero-mean Gaussian process are almost surely stable impulse responses if the corresponding kernel is diagonally square root integrable (DSRI).

In this work, we revisit the definition and notion of DSRI\footnote{Throughout this paper, DSRI stands for both of ``diagonally square root integrable'' and ``diagonally square root integrability''.} kernels, which was initially introduced in  \cite{chiuso2019system}.
Following this, we investigate the class of DSRI kernels by describing its structure as a partially ordered cone.
We show that this kernel category includes a broad range of well-known kernels commonly used in system identification, e.g., diagonally/correlated, stable spline, amplitude-modulated locally stationary, and simulation-induced kernels. The structure of DSRI kernel class is further elaborated by revisiting the fact that they are stable and integrable. This way, we obtain inner and outer approximations for the class of DSRI kernels.
Subsequently, we investigate fundamental topological features of RKHSs with DSRI kernels. Namely, it is shown that for linear operators defined on $\Lscrone$, the space of stable impulse responses, the continuity property is inherited when the operator is restricted to a RKHS endowed with a DSRI kernel.
For the stability of zero-mean Gaussian processes, we show that the sufficient condition introduced in  \cite{chiuso2019system} is also necessary.
We further generalize this result and provide suitable interpretations. 
Due to the theoretical nature of the work and in an effort to further facilitate reading the manuscript, the burdensome technical arguments, such as proofs of theorems and lemmas, have been moved to the appendix.
For the sake of completeness, the appendix provides all of the proofs, including the relatively simple ones. 

\section{Notation and Preliminaries} 
Throughout the paper, 
the set of natural numbers, the set of real numbers, the set of complex numbers, the set of non-negative integers, and the set of non-negative real numbers are denoted respectively by $\Nbb$, $\Rbb$, $\Cbb$, $\Zbb_+$,  and  $\Rbb_+$.
Moreover, $\Tbb$ denotes the time index set, which corresponds to either  to $\Zbb_+$ or $\Rbb_+$, and $\Tbb_{\pm}$ is defined as $\Tbb_{\pm}:=\Tbb\cup(-\Tbb)$.  
The generic measure space in our discussion is $(\Tbb,\GT,\mu)$, where $\GT$ and $\mu$ are respectively the $\sigma$-algebra of Borel subsets of $\Rbb_+$ and the Lebesgue measure, when $\Tbb=\Rbb_+$, and, 
$\GT$ and $\mu$ are respectively 
the set of subsets of $\Zbb_+$ and the counting measure, when $\Tbb=\Zbb_+$.
Accordingly, 
we additionally consider the measure space $(\Tbb\times\Tbb,\GT\otimes\GT,\mu\times\mu)$, where $\GT\otimes\GT$ and $\mu\times\mu$ are respectively the product $\sigma$-algebra and product measure defined based on $\GT$ and $\mu$.
Furthermore, we assume $\Rbb$ is endowed with Borel  $\sigma$-algebra $\Bscr$ and Lebesgue measure.
Given a measurable space $(\Xcal,\Fscr)$, the space of measurable functions $\vcv:\Xcal\to \Rbb$ is denoted by $\Rbb^{\Xcal}$, and 
$\vcv\in\Rbb^{\Xcal}$ is shown entry-wise as $\vcv=(v_x)_{x\in\Xcal}$, or $\vcv=(v(x))_{x\in\Xcal}$.
Given $\Ycal\subset\Xcal$, the \emph{indicator} function
$\onefun_{\Ycal}:\Xcal\to\{0,1\}$ is defined as
$\onefun_{\Ycal}(x) = 1$, if $x\in\Ycal$, and $\onefun_{\Ycal}(x) = 0$, otherwise.
Depending on the context,  $\Lscrinfty$ denotes  $\ellinfty(\Zbb)$ or $\Linfty(\Rbb)$. 
Similarly, $\Lscrone$ refers to  $\ellone(\Zbb_+)$ or $\Lone(\Rbb_+)$. 
For $p\in\{1,\infty\}$, the norm in $\Lscrp$ is denoted by $\|\cdot\|_{p}$.
The norms defined on Banach spaces $\Lscrone$  and $\Lscrinfty$
are respectively denoted by $\|\cdot\|_{1}$ and $\|\cdot\|_{\infty}$.
The space of bounded linear operators from Banach space $\Xbb$ to Banach space $\Ybb$ is a Banach space, denoted by $\Lcal(\Xbb,\Ybb)$ and endowed with norm $\|\cdot\|_{\Lcal(\Xbb,\Ybb)}$ \cite{brezis2011functional}. 

\section{Diagonally Square Root Integrable Kernels} \label{sec:DSRI}
In this section, the definition of diagonally square root integrable kernels is revisited. To this end, we need to recall the notion of Mercer kernels \cite{berlinet2011reproducing}.
\begin{definition}[\cite{berlinet2011reproducing}]  
	\label{def:kernel_and_section}
	The symmetric measurable function $\kernel:\Tbb\times\Tbb\to \Rbb$ is said to be a {\em positive-definite kernel}, or simply, \emph{kernel}, when, for any $m\in\Nbb$, $s_1,\ldots,s_n\in\Tbb$, and $a_1,\ldots,a_n\in\Rbb$, we have  
	$\sum_{i,j=1}^{m}a_i\kernel(s_i,s_j)a_j\ge 0$.
	For each $t\in\Tbb$, the function $\kernel_t:\Tbb\to\Rbb$, defined as $\kernel_{t}(\cdot)=\kernel(t,\cdot)$, is called the {\em section} of kernel $\kernel$ at $t$.
\end{definition}  
The following definition introduces our main object of interest in this paper.
\begin{definition}\label{def:DSRI_kernel}
	The positive-definite kernel $\,  \kernel:\Tbb\times\Tbb\to \Rbb$ is said to be  \emph{diagonally square root integrable} (DSRI)
	if $\DSRIvalue(\kernel)<\infty$, where $\DSRIvalue(\kernel)$ is defined as 
	\begin{equation}\label{eqn:DSRI_R_+_Z_+}
		\ra{1.5}
		\DSRIvalue(\kernel):=
		\left\{
		\begin{array}{ll} 
			\displaystyle\int_{\Rbb_+}\kernel(t,t)^{\frac12}\  \drm t,
			& 
			\text{\quad  when } \Tbb=\Rbb_+,\\	
			\displaystyle\sum_{t\in\Zbb_+}\kernel(t,t)^{\frac12},  
			& 
			\text{\quad when } \Tbb=\Zbb_+.\\
		\end{array}
		\right.
	\end{equation}
	The class of DSRI kernels is denoted by $\DSRIclass$.
\end{definition}
For any $t\in\Tbb$, one should note that $\kernel(t,t)\ge 0$, which is implied by positive-definiteness property given in Definition~\ref{def:kernel_and_section}.  
Consequently, the right-hand sides in \eqref{eqn:DSRI_R_+_Z_+} are well-defined for any positive-definite kernel, with possible values in $\Rbb_+\cup\{+\infty\}$. According to Definition~\ref{def:DSRI_kernel}, kernel $\kernel$ is DSRI when this value is finite, i.e., $\DSRIvalue(\kernel)<\infty$.   

Given the definition of the DSRI kernels, it is natural to ask about the kernels satisfying this property and their particular features of interest. These questions will be addressed in the following sections.

\section{Well-known DSRI Kernels} \label{sec:DSRI_well_known_kernels}
In this section, we study the class of DSRI kernels, $\DSRIclass$, by showing that many well-known kernels in the system identification context belong to this category of kernels.
To this end, we need the notion of (diagonal) \emph{dominancy}, which introduces a partial order on the set of  positive-definite kernels. 
\begin{definition}\label{def:kerenl_dominance}
	Let $\kernel,\kernelh:\Tbb\times\Tbb\to\Rbb$ be  positive-definite kernels.
	We say $\kernelh$  \emph{dominates} $\kernel$ if there exists $C\in\Rbb_+$ such that  $|\kernel(s,t)|\le C|\kernelh(s,t)|$, for all $t,s\in\Tbb$. 
	Similarly, it is said that $\kernelh$ \emph{diagonally dominates} $\kernel$ if the inequality holds when $s$ equals $t$.
\end{definition}
To elaborate on the importance of Definition~\ref{def:kerenl_dominance}
in describing $\DSRIclass$, we need
to introduce
\emph{finite-rank exponential kernels}.
More precisely, given $n\in\Nbb$, $\bflambda = [\lambda_1,\ldots,\lambda_n]^\tr\in\Rbb_+^n$, and $\bfalpha = [\alpha_1,\ldots,\alpha_n]^\tr\in[0,1)^n$, 
the 
\emph{rank-$n$ exponential} kernel $\kernelRnE{n}:\Tbb\times\Tbb\to\Rbb$ is defined as
\begin{equation}\label{eqn:RnE_kernel}
    \kernelRnE{n}(s,t) 
    = 
    \sum_{i=1}^n \lambda_i \alpha_i^{\frac12(s+t)},    
\end{equation}
for any $s,t\in \Tbb$. 
We denote the kernel by $\kernelRnE{n}(\cdot,\cdot\,;\bflambda,\bfalpha)$, 
and write $\kernelRnE{n}(s,t\,;\bflambda,\bfalpha)$ 
on the left-hand side of \eqref{eqn:RnE_kernel}, when we want to highlight the dependency  on the \emph{hyperparameter} vectors $\bflambda$ and $\bfalpha$.
\begin{theorem}\label{thm:dominancy-and-k_RnE_DSRI}
	\emph{i)} Let $\kernel,\kernelh:\Tbb\times\Tbb\to\Rbb$ be positive-definite kernels where 
	$\kernelh$ is DSRI. If $\kernelh$ (diagonally) dominates $\kernel$, then $\kernel$ is DSRI.\\
	\emph{ii)} The rank-$n$ exponential kernel $\kernelRnE{n}:\Tbb\times\Tbb\to\Rbb$ defined in \eqref{eqn:RnE_kernel} is DSRI.
\end{theorem}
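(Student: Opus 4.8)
The plan is to treat the two parts separately: part (i) is a pointwise monotonicity argument, and part (ii) reduces to a geometric-type summation via subadditivity of the square root.

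\textbf{Part (i).} First note that ordinary dominance implies diagonal dominance, so it suffices to assume $|\kernel(t,t)|\le C|\kernelh(t,t)|$ for all $t\in\Tbb$ and some $C\in\Rbbp$. Since positive-definiteness (Definition~\ref{def:kernel_and_section}) forces $\kernel(t,t)\ge0$ and $\kernelh(t,t)\ge0$, this reads $\kernel(t,t)\le C\,\kernelh(t,t)$, hence $\kernel(t,t)^{1/2}\le C^{1/2}\kernelh(t,t)^{1/2}$ pointwise on $\Tbb$. Integrating (resp.\ summing) this inequality over $\Tbb$ and invoking Definition~\ref{def:DSRI_kernel} gives $\DSRIvalue(\kernel)\le C^{1/2}\DSRIvalue(\kernelh)<\infty$, i.e.\ $\kernel\in\DSRIclass$. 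The only technical point is measurability of $t\mapsto\kernel(t,t)^{1/2}$, which follows from measurability of $\kernel$ together with continuity of the square root.

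\textbf{Part (ii).} The key elementary fact is subadditivity of the square root: $\big(\sum_{i}a_i\big)^{1/2}\le\sum_i a_i^{1/2}$ for nonnegative reals $a_i$. Applied to the diagonal $\kernelRnE{n}(t,t)=\sum_{i=1}^n\lambda_i\alpha_i^{t}$, this yields
\[
\kernelRnE{n}(t,t)^{\frac12}\le\sum_{i=1}^n\lambda_i^{\frac12}\big(\alpha_i^{\frac12}\big)^{t},\qquad t\in\Tbb.
\]
Integrating (resp.\ summing) over $\Tbb$ and exchanging the finite sum with the integral/sum reduces the claim to bounding each term: since $\alpha_i\in[0,1)$ we have $\alpha_i^{1/2}\in[0,1)$, so $\sum_{t\in\Zbbp}\lambda_i^{1/2}(\alpha_i^{1/2})^t=\lambda_i^{1/2}/(1-\alpha_i^{1/2})$ and $\int_{\Rbbp}\lambda_i^{1/2}(\alpha_i^{1/2})^t\,\drm t=-2\lambda_i^{1/2}/\ln\alpha_i$, both finite (with the term read as $0$ when $\alpha_i=0$). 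Summing the $n$ finite contributions gives $\DSRIvalue(\kernelRnE{n})<\infty$.

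Equivalently, one can first record that $\DSRIvalue$ is itself subadditive, $\DSRIvalue(\kernel+\kernelh)\le\DSRIvalue(\kernel)+\DSRIvalue(\kernelh)$ — again by subadditivity of the square root under the integral/sum — and that $\kernelRnE{n}$ is the sum of the $n$ rank-one exponential kernels $\kernelRnE{1}(\cdot,\cdot\,;\lambda_i,\alpha_i)$, so only the rank-one computation is needed. I do not anticipate a genuine obstacle in either route; the only mild points are the measurability remark in part (i), the degenerate case $\alpha_i=0$, and, if one wants completeness, observing that $\kernelRnE{n}$ is genuinely positive-definite because each summand $\lambda_i(\alpha_i^{1/2})^s(\alpha_i^{1/2})^t$ is a nonnegative multiple of a rank-one positive-semidefinite kernel.
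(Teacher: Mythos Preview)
Your proof is correct. Part~(i) is exactly the paper's argument. For part~(ii) there is a small but genuine difference in the bounding step: the paper dominates $\kernelRnE{n}(t,t)$ by the single exponential $\lambda\,\alpha^{t}$ with $\alpha=\max_i\alpha_i$ and $\lambda=\sum_i\lambda_i$, whereas you invoke subadditivity of the square root to bound $\kernelRnE{n}(t,t)^{1/2}$ by the sum $\sum_i\lambda_i^{1/2}(\alpha_i^{1/2})^t$ and handle each rank-one term separately. Both routes are equally short; yours yields a slightly sharper numerical bound on $\DSRIvalue(\kernelRnE{n})$ and makes the reduction to the rank-one case explicit, while the paper's route avoids the subadditivity step at the cost of a cruder constant. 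One minor quibble: your parenthetical ``the term read as $0$ when $\alpha_i=0$'' is correct for the continuous integral but not for the discrete sum, where the $t=0$ term contributes $\lambda_i^{1/2}$; this does not affect finiteness, and in fact the geometric-series formula $\lambda_i^{1/2}/(1-\alpha_i^{1/2})$ already gives the right value there.
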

Theorem~\ref{thm:dominancy-and-k_RnE_DSRI} can be used to show that a variety of kernels belongs to $\DSRIclass$.
In the literature of system identification, various kernels are introduced  \cite{pillonetto2014kernel,chen2012estimation}, e.g., 
\emph{diagonal},
\emph{diagonally/correlated},
\emph{tuned/correlated},
and 
\emph{stable spline} kernels,
which are respectively denoted by 
$\kernelDI$, $\kernelDC$, $\kernelTC$, and $\kernelSS$,
and defined as
\begin{align}
    &\kernelDI(s,t) = 
    \onefun_{\{0\}}(s-t)  \alpha^{s},
    \label{eqn:DI_kernel}
    \\
    &\kernelDC(s,t)=\alpha^{\frac12(s+t)} \gamma^{|s-t|},
    \label{eqn:DC_kernel}
    \\
    &\kernelTC(s,t)=\alpha^{\max(s,t)}, 
    \label{eqn:TC_kernel}
    \\
    &\kernelSS(s,t)=\alpha^{\max(s,t)+s+t}-\frac{1}{3}\alpha^{3\max(s,t)},
    \label{eqn:SS_kernel}
\end{align}
for any $s,t\in\Tbb$, 
where $\alpha\in(0,1)$, $\gamma\in (-1,1)$, if $\Tbb=\Zbb_+$, and, $\gamma\in (0,1)$, if $\Tbb=\Rbb_+$. 
Moreover, in \cite{pillonetto2016AtomicNuclearKernel}, 
the first and second order \emph{integral stable spline}  kernels
are defined as 
\begin{align}
    &\kerneliTC(s,t)
    = 
    \frac{\alpha^{\max(s,t)+1}-\beta^{\max(s,t)+1}}{\max(s,t)+1},
    \label{eqn:iTC_kernel}
    \\
    \begin{split}
        &\kerneliSS(s,t)
        =
        \frac{\alpha^{s+t+\max(s,t)+1}-\beta^{s+t+\max(s,t)+1}}{s+t+\max(s,t)+1}
        \\
        &\qquad\qquad\qquad
        -\frac{\alpha^{3\max(s,t)+1}-\beta^{3\max(s,t)+1}}{9\max(s,t)+3},
    \end{split}
    \label{eqn:iSS_kernel}
\end{align}
for any $s,t \in\Tbb$, where $0\le \beta \le \alpha<1$.
We can directly calculate $\DSRIvalue(\kernel)$ using \eqref{eqn:DSRI_R_+_Z_+},
for the above-mentioned kernels, and show that these kernels belong to $\DSRIclass$. 
On the other hand, we can easily see that
kernels $\kernelDI$, $\kernelDC$, $\kernelTC$, and $\kerneliTC$ are dominated by 
$\kernelRnE{n}(\cdot,\cdot\,;1,\alpha)$.
Similarly, we can show that the 
$\kernelRnE{n}(\cdot,\cdot\,;1,\alpha^3)$
dominates $\kernelSS$ and $\kerneliSS$.
Thus, one can easily conclude from Theorem~\ref{thm:dominancy-and-k_RnE_DSRI} that 
each of the above-mentioned kernels are DSRI.
Based on the same line of argument, one can show the same result for the $n^\nth$-order stable spline kernels \cite{pillonetto2014kernel} (see Appendix~\ref{apn:DSRI_SSn} for more details).
\begin{theorem}\label{thm:sum_product_DSRI}
	Let $\kernel,\kernelh:\Tbb\times\Tbb\to\Rbb$ be  positive-definite kernels, where $\kernel$ is DSRI. \\	
	\emph{i)} If $\kernelh$ is DSRI, then $\alpha\kernel+\beta\kernelh$ is a DSRI kernel, for any $\alpha,\beta\in\Rbb_+$.\\
	\emph{ii)} If $\sup_{t\in\Tbb}\kernelh(t,t)<\infty$, then $\kernel\kernelh$ is a DSRI kernel.
\end{theorem}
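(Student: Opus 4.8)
The plan is to reduce both claims to elementary pointwise inequalities on the diagonal, after first checking that $\alpha\kernel+\beta\kernelh$ and $\kernel\kernelh$ are themselves positive-definite kernels. Symmetry and measurability are immediate in both cases, since a nonnegative linear combination and a pointwise product of symmetric measurable functions are again symmetric and measurable. For positive-definiteness of $\alpha\kernel+\beta\kernelh$ one uses the standard fact that $\sum_{i,j}a_i(\alpha\kernel(s_i,s_j)+\beta\kernelh(s_i,s_j))a_j = \alpha\sum_{i,j}a_i\kernel(s_i,s_j)a_j+\beta\sum_{i,j}a_i\kernelh(s_i,s_j)a_j\ge 0$ whenever $\alpha,\beta\ge 0$; for $\kernel\kernelh$ one invokes the Schur product theorem, since the Gram matrices $[\kernel(s_i,s_j)]_{i,j}$ and $[\kernelh(s_i,s_j)]_{i,j}$ are positive semidefinite and the Hadamard product of positive semidefinite matrices is positive semidefinite. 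Throughout I would write $\DSRIvalue(\kernel)=\int_\Tbb\kernel(t,t)^{1/2}\,\drm\mu(t)$ in terms of the unified measure space $(\Tbb,\GT,\mu)$ from the preliminaries, so that the cases $\Tbb=\Rbb_+$ and $\Tbb=\Zbb_+$ are treated simultaneously.

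For part~(i), the key inequality is subadditivity of the square root, $\sqrt{a+b}\le\sqrt a+\sqrt b$ for all $a,b\ge 0$. Applying it with $a=\alpha\kernel(t,t)$ and $b=\beta\kernelh(t,t)$ (both nonnegative by positive-definiteness) gives the pointwise bound $(\alpha\kernel(t,t)+\beta\kernelh(t,t))^{1/2}\le\sqrt\alpha\,\kernel(t,t)^{1/2}+\sqrt\beta\,\kernelh(t,t)^{1/2}$. Integrating over $(\Tbb,\GT,\mu)$ and using monotonicity and linearity of the integral yields $\DSRIvalue(\alpha\kernel+\beta\kernelh)\le\sqrt\alpha\,\DSRIvalue(\kernel)+\sqrt\beta\,\DSRIvalue(\kernelh)$, which is finite because $\kernel$ and $\kernelh$ are DSRI; hence $\alpha\kernel+\beta\kernelh\in\DSRIclass$.

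For part~(ii), set $M:=\sup_{t\in\Tbb}\kernelh(t,t)$, which is finite by hypothesis and nonnegative since $\kernelh(t,t)\ge 0$ for every $t$. The diagonal of the product kernel is $(\kernel\kernelh)(t,t)=\kernel(t,t)\kernelh(t,t)$, so $(\kernel\kernelh)(t,t)^{1/2}=\kernel(t,t)^{1/2}\kernelh(t,t)^{1/2}\le M^{1/2}\,\kernel(t,t)^{1/2}$ for all $t$. Integrating gives $\DSRIvalue(\kernel\kernelh)\le M^{1/2}\DSRIvalue(\kernel)<\infty$, so $\kernel\kernelh\in\DSRIclass$.

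I do not expect a genuine obstacle: each part is a one-line estimate. The only ingredient that is not entirely trivial is the appeal to the Schur product theorem for positive-definiteness of $\kernel\kernelh$ in part~(ii); everything else is subadditivity and monotonicity of $\sqrt{\cdot}$ together with monotonicity of the integral. The one point to state carefully is that the diagonal values $\kernel(t,t)$ and $\kernelh(t,t)$ are nonnegative, so that the square roots are well defined and the bound by $M^{1/2}$ is legitimate; this was already observed after Definition~\ref{def:DSRI_kernel}.
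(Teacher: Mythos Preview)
Your proof is correct and follows essentially the same approach as the paper: the same pointwise inequalities $(\alpha\kernel(t,t)+\beta\kernelh(t,t))^{1/2}\le\sqrt\alpha\,\kernel(t,t)^{1/2}+\sqrt\beta\,\kernelh(t,t)^{1/2}$ and $(\kernel(t,t)\kernelh(t,t))^{1/2}\le(\sup_t\kernelh(t,t))^{1/2}\kernel(t,t)^{1/2}$, followed by integration. You are in fact slightly more thorough than the paper, which does not explicitly verify that $\alpha\kernel+\beta\kernelh$ and $\kernel\kernelh$ are positive-definite kernels (your invocation of the Schur product theorem for the latter is the right justification).
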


Theorem~\ref{thm:dominancy-and-k_RnE_DSRI} and Theorem~\ref{thm:sum_product_DSRI} characterize the structure of the class of DSRI kernels as a cone equipped with a partial order. Also, they can further be used to verify the DSRI property for other kernels.
For example, consider
kernel $\kerneliTS$ introduced in \cite{pillonetto2016AtomicNuclearKernel} as the combination of $\kerneliTC$ and $\kerneliSS$, i.e., we have 
$\kerneliTS(s,t) := \kerneliTC(s,t) +\kerneliSS(s,t)$,   
for any $s,t \in\Tbb$.
Based on the above discussion and Theorem~\ref{thm:sum_product_DSRI}, one can easily see that $\kerneliTS$ is a DSRI kernel.

Let $\vcv :=(v_t)_{t\in\Tbb}\in\Lscrone$ and $\kernel_{\vcv}:\Tbb\times\Tbb\to\Rbb$ be 
defined as
$\kernel_{\vcb}(s,t)=v_sv_t$, 
for any $s,t\in\Tbb$ \cite{chen2018kernel}.
One can easily see that $\kernel_{\vcv}$ is a rank-$1$  positive-definite kernel with
\begin{equation*}\!\!\!\!
\DSRIvalue(\kernel_{\vcv}) 
=
\begin{cases}
\displaystyle\int_{\Rbb_+}(v_t^2)^{\frac12}\drm t 
= 
\displaystyle\int_{\Rbb_+}|v_t|\drm t,
&\text{ if }\Tbb=\Rbb_+,\\
\displaystyle\sum_{t\in\Zbb_+}(v_t^2)^{\frac12}
=
\displaystyle\sum_{t\in\Zbb_+}|v_t|, 
&\text{ if }\Tbb=\Zbb_+,
\end{cases}
\end{equation*}
which says that $\DSRIvalue(\kernel_{\vcv})=\|\vcv\|_1$.
This implies that $\kernel_{\vcv}\in\DSRIclass$. 
In~\cite{chen2018kernel}, the \emph{amplitude modulated locally stationary} (AMLS) kernels are introduced, which  are generalized form of $\kernel_{\vc{v}}$. 
More precisely, let $\kernelst:\Tbb\times\Tbb\to \Rbb$ be a stationary  positive-definite kernel, i.e., we have 
\begin{equation}
\kernelst(s+\tau,t+\tau)=\kernelst(s,t), 
\qquad \forall s,t,\tau\in\Tbb.
\end{equation}
Subsequently, the AMLS kernel $\kernelAMLS:\Tbb\times\Tbb\to \Rbb$ is defined as
\begin{equation}\label{eqn:kernel_AMLS}
\kernelAMLS(s,t) = v_t \kernelst(s,t)v_s,   
\qquad \forall s,t\in\Tbb.
\end{equation}
Note that since $\kernelst$ is a stationary kernel, 
we know that $\sup_{t\in\Tbb}\kernelst(t,t) = \kernelst(0,0)<\infty$.
Therefore, due to Theorem~\ref{thm:sum_product_DSRI} and $\kernel_{\vcb}\in\DSRIclass$, we have  $\kernelAMLS\in\DSRIclass$. 
In addition to $\kernelAMLS$, the \emph{simulation induced} kernels are introduced in \cite{chen2018kernel}.
Similar to our previous discussion, one can show that under certain conditions, the simulation induced kernels are DSRI (see Appendix \ref{apn:DSRI_SI} for more details).

We can show that the DSRI property is preserved under proper  sampling (see Appendix~\ref{apn:DSRI_sampling}) and  reparameterization of the arguments of the kernel (see Appendix~\ref{apn:DSRI_reparameterization}).
Using Theorem~\ref{thm:dominancy-and-k_RnE_DSRI} and Theorem~\ref{thm:sum_product_DSRI}, based on the discussion provided in this section, and following line of arguments similar to Appendices~\ref{apn:DSRI_SSn}, \ref{apn:DSRI_SI}, \ref{apn:DSRI_sampling}, and \ref{apn:DSRI_reparameterization},
one can show that a broad range of kernels are DSRI.
The class of DSRI kernels is further studied in the next section.

\section{DSRI Kernels: Stability and Integrability}
\label{sec:DSRI_stability_integrability}
To elaborate further on the structure of the class of DSRI kernels, we investigate their stability and integrability properties in this section. 
Since in the kernel-based system identification framework, the kernel attributes are inherited by the identified model, one may ask about the main feature of concern, which is the stability of the kernel. To address this question, we need to recall the notion of stable kernels \cite{pillonetto2014kernel}.
\begin{definition}[\cite{pillonetto2014kernel}]
	The positive-definite kernel $\,\kernel:\Tbb\times\Tbb\to \Rbb$ is said to be  \emph{stable}  if, for any $\vc{u}=(u_s)_{s\in\Tbb}\in\Lscr^{\infty}$,  
	one has  
	\begin{equation}\label{eqn:stability_kernel_R_+_Z_+}  
		\ra{1.5}
		\!\!\!
		\left\{
		\begin{array}{ll} 
			\displaystyle\int_{\Rbb_+}\Big|\displaystyle\int_{\Rbb_+}u_s\kernel(t,s)\drm s\Big|\drm t<\infty,
			\!& \!
			\text{\quad  when } \Tbb=\Rbb_+,\!\\	
			\displaystyle\sum_{t\in\Zbb_+}\Big|\displaystyle\sum_{s\in\Zbb_+}u_s\kernel(t,s)\Big|<\infty,,  
			\!& \!
			\text{\quad when } \Tbb=\Zbb_+.\!\\
		\end{array}
		\right.
	\end{equation}
	The class of stable kernels is denoted by $\Sstable$.
\end{definition}
The following theorem demonstrates the relationship between the DSRI kernels and the stable kernels.
\begin{theorem}[\cite{chiuso2019system}]\label{thm:DSRI_stable} 
    Every DSRI kernel is stable. 
\end{theorem}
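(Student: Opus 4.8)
The plan is to deduce stability directly from the finiteness of $\DSRIvalue(\kernel)$ by invoking the Cauchy--Schwarz inequality for positive-definite kernels, so that the whole proof comes down to estimating a single nested integral. First I would record the pointwise bound $|\kernel(s,t)| \le \kernel(s,s)^{\frac12}\kernel(t,t)^{\frac12}$ for all $s,t\in\Tbb$. This follows at once from Definition~\ref{def:kernel_and_section}: choosing $m=2$ with nodes $s$ and $t$ shows that the symmetric $2\times 2$ matrix with entries $\kernel(s,s),\kernel(s,t),\kernel(t,s),\kernel(t,t)$ is positive semidefinite, hence has nonnegative determinant, i.e. $\kernel(s,s)\kernel(t,t)-\kernel(s,t)^2\ge 0$.

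Next I would fix $\vc{u}=(u_s)_{s\in\Tbb}\in\Lscrinfty$ and, for each fixed $t\in\Tbb$, bound the inner integral using the pointwise estimate together with $|u_s|\le\norm{\vc{u}}_{\infty}$:
\begin{equation*}
	\int_{\Tbb}|u_s|\,|\kernel(t,s)|\,\drm\mu(s)
	\le \norm{\vc{u}}_{\infty}\,\kernel(t,t)^{\frac12}\!\int_{\Tbb}\kernel(s,s)^{\frac12}\,\drm\mu(s)
	= \norm{\vc{u}}_{\infty}\,\DSRIvalue(\kernel)\,\kernel(t,t)^{\frac12},
\end{equation*}
which is finite because $\kernel\in\DSRIclass$. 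In particular $s\mapsto u_s\kernel(t,s)$ is $\mu$-integrable, so $\int_{\Tbb}u_s\kernel(t,s)\,\drm\mu(s)$ is well defined and satisfies $\big|\int_{\Tbb}u_s\kernel(t,s)\,\drm\mu(s)\big|\le \norm{\vc{u}}_{\infty}\DSRIvalue(\kernel)\,\kernel(t,t)^{\frac12}$. Integrating this over $t$ would then give
\begin{equation*}
	\int_{\Tbb}\Big|\int_{\Tbb}u_s\kernel(t,s)\,\drm\mu(s)\Big|\,\drm\mu(t)
	\le \norm{\vc{u}}_{\infty}\,\DSRIvalue(\kernel)\!\int_{\Tbb}\kernel(t,t)^{\frac12}\,\drm\mu(t)
	= \norm{\vc{u}}_{\infty}\,\DSRIvalue(\kernel)^2 < \infty,
\end{equation*}
which is exactly \eqref{eqn:stability_kernel_R_+_Z_+} in both the $\Tbb=\Rbb_+$ case (reading $\int_{\Tbb}\cdot\,\drm\mu$ as the Lebesgue integral) and the $\Tbb=\Zbb_+$ case (reading it as the sum over $\Zbb_+$); hence $\kernel\in\Sstable$. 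As a byproduct, the induced operator $\vc{u}\mapsto\big(t\mapsto\int_{\Tbb}u_s\kernel(t,s)\,\drm\mu(s)\big)$ would be seen to map $\Lscrinfty$ into $\Lscrone$ with operator norm at most $\DSRIvalue(\kernel)^2$.

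The argument is so short that the only point I would treat with care is measurability. Since $\kernel$ is measurable and $\vc{u}\in\Lscrinfty$, the map $(s,t)\mapsto u_s\kernel(t,s)$ is $\GT\otimes\GT$-measurable, and the absolute convergence established above lets Tonelli's theorem guarantee that $t\mapsto\int_{\Tbb}u_s\kernel(t,s)\,\drm\mu(s)$ is $\GT$-measurable, so the outer integral is legitimate. Beyond this routine bookkeeping I do not anticipate any genuine obstacle; the entire content of the proof is the single use of the kernel Cauchy--Schwarz inequality combined with $\DSRIvalue(\kernel)<\infty$.
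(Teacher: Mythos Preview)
Your proof is correct. The paper itself does not supply a proof of this theorem; it simply attributes the result to \cite{chiuso2019system}, so there is no in-paper argument to compare against. Your route via the kernel Cauchy--Schwarz inequality $|\kernel(s,t)|\le\kernel(s,s)^{1/2}\kernel(t,t)^{1/2}$ is the standard one and is completely sound, including the measurability bookkeeping.

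It is worth noting that your intermediate estimate actually yields more than stability: taking $\vc{u}\equiv 1$ in your inner bound gives $\int_{\Tbb}|\kernel(t,s)|\,\drm\mu(s)\le\kernel(t,t)^{1/2}\DSRIvalue(\kernel)$, and integrating in $t$ yields $\int_{\Tbb}\int_{\Tbb}|\kernel(s,t)|\,\drm\mu(s)\,\drm\mu(t)\le\DSRIvalue(\kernel)^2<\infty$, which is precisely the integrability of $\kernel$ in the sense of Definition~\ref{def:integrable_kernel}. Thus your argument simultaneously establishes Theorem~\ref{thm:DSRI_integrable}, which the paper likewise attributes to \cite{chiuso2019system} without proof; stability then also follows from the known inclusion $\Sscr_1\subseteq\Sstable$ discussed in Section~\ref{sec:DSRI_stability_integrability}.
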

We have already verified that 
$\DSRIclass\subseteq\Sstable$.
In addition to stable kernels, a well-known interesting category of kernels in the context of system identification are the integrable ones. In the following, we review their definition.
\begin{definition}[\cite{pillonetto2014kernel}]\label{def:integrable_kernel}
The positive-definite kernel $\,\kernel:\Tbb\times\Tbb\to \Rbb$ is called \emph{integrable}
if we have
\begin{equation}\label{eqn:abs_summable_integrable}  
	\ra{1.5}
	\!\!\!
	\left\{
	\begin{array}{ll} 
		\displaystyle\int_{\Rbb_+}\displaystyle\int_{\Rbb_+}\big|\kernel(t,s)\big|\drm s \drm t<\infty,
		\!& \!
		\text{\quad  when } \Tbb=\Rbb_+,\!\\	
		\displaystyle\sum_{t\in\Zbb_+}\displaystyle\sum_{s\in\Zbb_+}\big|\kernel(t,s)\big|<\infty,,  
		\!& \!
		\text{\quad when } \Tbb=\Zbb_+.\!\\
	\end{array}
	\right.
\end{equation}
The class of integrable kernels is denoted by $\Sscr_1$.
\end{definition}
It is known that the set of integrable kernels is a subclass of stable kernels \cite{pillonetto2014kernel,bisiacco2020mathematical}, i.e., $\Sscr_1\subseteq\Sstable$.
The following theorem further characterizes the class of DSRI kernels by elaborating their connection with the integrable kernels. 
This theorem
is implicitly implied from the proof of Lemma~2 in \cite{chiuso2019system}.
\begin{theorem}[\cite{chiuso2019system}]\label{thm:DSRI_integrable}
    Every DSRI kernel is integrable. 
\end{theorem}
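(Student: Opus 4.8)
The plan is to reduce the two-dimensional integrability of $\kernel$ to the one-dimensional finiteness of $\DSRIvalue(\kernel)$ via the elementary pointwise bound that every positive-definite kernel obeys through its diagonal. First I would establish that, for all $s,t\in\Tbb$,
\begin{equation*}
	\big|\kernel(s,t)\big|\;\le\;\kernel(s,s)^{\frac12}\,\kernel(t,t)^{\frac12}.
\end{equation*}
This follows by applying the positive-definiteness condition of Definition~\ref{def:kernel_and_section} with two nodes $s_1=s$, $s_2=t$: the symmetric matrix $\big[\kernel(s_i,s_j)\big]_{i,j=1}^{2}$ is positive semidefinite, hence its determinant is non-negative, i.e.\ $\kernel(s,t)^2\le\kernel(s,s)\kernel(t,t)$, and taking square roots gives the claim (recall $\kernel(s,s),\kernel(t,t)\ge 0$).

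Next I would integrate (respectively sum) this inequality. Since $\kernel$ is measurable, the functions $(s,t)\mapsto|\kernel(t,s)|$ and $(s,t)\mapsto\kernel(t,t)^{\frac12}\kernel(s,s)^{\frac12}$ are non-negative and $\GT\otimes\GT$-measurable; for the latter, $t\mapsto\kernel(t,t)$ is the composition of $\kernel$ with the continuous diagonal embedding $t\mapsto(t,t)$, and $x\mapsto x^{\frac12}$ is continuous on $\Rbb_+$. Non-negativity lets me invoke Tonelli's theorem on $(\Tbb\times\Tbb,\GT\otimes\GT,\mu\times\mu)$ (which, for $\Tbb=\Zbb_+$, is merely the rearrangement of a non-negative double series), so that
\begin{equation*}
	\int_{\Tbb}\!\!\int_{\Tbb}\big|\kernel(t,s)\big|\,\drm\mu(s)\,\drm\mu(t)
	\;\le\;
	\int_{\Tbb}\!\!\int_{\Tbb}\kernel(t,t)^{\frac12}\kernel(s,s)^{\frac12}\,\drm\mu(s)\,\drm\mu(t)
	\;=\;
	\Big(\int_{\Tbb}\kernel(t,t)^{\frac12}\,\drm\mu(t)\Big)^{\!2}
	\;=\;\DSRIvalue(\kernel)^2 .
\end{equation*}
As $\kernel$ is DSRI, the right-hand side is finite, and unwinding the abstract measure space into its two instantiations ($\mu$ = Lebesgue measure for $\Tbb=\Rbb_+$, $\mu$ = counting measure for $\Tbb=\Zbb_+$) recovers exactly condition \eqref{eqn:abs_summable_integrable}; hence $\kernel\in\Sscr_1$.

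I do not anticipate a real obstacle: the only points demanding care are the $2\times 2$ positive-semidefiniteness argument underpinning the diagonal bound and the routine measurability verification that legitimises Tonelli. The chain of inequalities is moreover quantitative, showing the double integral in \eqref{eqn:abs_summable_integrable} is at most $\DSRIvalue(\kernel)^2$; combined with the known inclusion $\Sscr_1\subseteq\Sstable$, it also yields an alternative proof of Theorem~\ref{thm:DSRI_stable}.
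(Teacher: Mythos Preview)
Your argument is correct and is the natural one: the $2\times 2$ positive-semidefiniteness of the Gram matrix yields $|\kernel(s,t)|\le\kernel(s,s)^{\frac12}\kernel(t,t)^{\frac12}$, and Tonelli then turns the double integral into $\DSRIvalue(\kernel)^2<\infty$. The paper does not supply its own proof of this theorem --- it simply attributes the result to \cite{chiuso2019system}, noting that it is ``implicitly implied from the proof of Lemma~2'' there --- so there is nothing to compare against beyond observing that your route is exactly the expected one and, in fact, the same diagonal Cauchy--Schwarz bound the paper itself exploits elsewhere (e.g.\ in the proof of Theorem~\ref{thm:continuity_inherits}).
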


In \cite{bisiacco2020kernel}, it is verified that there exists a stable kernel $\kernel:\Zbb_+\times\Zbb_+\to \Rbb$ which is not integrable, i.e., $\sum_{s,t\in\Zbb_+}|\kernel(s,t)|=\infty$.
The next theorem verifies a similar property for DSRI kernels.
\begin{theorem}\label{thm:integrable_not_DSRI}
	There exists an integrable kernel which is not a DSRI kernel.
\end{theorem}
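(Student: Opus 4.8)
The plan is to construct an explicit kernel $\kernel:\Zbb_+\times\Zbb_+\to\Rbb$ that is integrable in the sense of Definition~\ref{def:integrable_kernel} but whose diagonal decays too slowly to be summable after taking square roots. The natural device is a rank-one kernel of the form $\kernel(s,t)=v_sv_t$ for a suitable sequence $\vcv=(v_t)_{t\in\Zbb_+}$. As observed in the excerpt, for such a kernel one has $\DSRIvalue(\kernel)=\|\vcv\|_1$, while the integrability condition reads $\sum_{s,t\in\Zbb_+}|v_sv_t|=\big(\sum_t|v_t|\big)^2=\|\vcv\|_1^2$. Unfortunately this shows that a rank-one kernel is integrable if and only if it is DSRI, so a rank-one construction cannot work, and the argument must use a kernel whose off-diagonal behaviour is genuinely smaller than what the diagonal would force in a rank-one model.

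The key idea is therefore to take a \emph{diagonal} kernel, $\kernel(s,t)=\onefun_{\{0\}}(s-t)\,d_t$, where $d_t\ge 0$ is a chosen sequence. For such a kernel the double sum in \eqref{eqn:abs_summable_integrable} collapses to $\sum_{t\in\Zbb_+}d_t$, so $\kernel$ is integrable precisely when $(d_t)\in\ellone$; meanwhile $\DSRIvalue(\kernel)=\sum_{t\in\Zbb_+}d_t^{1/2}$. Hence it suffices to exhibit a nonnegative sequence with $\sum_t d_t<\infty$ but $\sum_t d_t^{1/2}=\infty$. The standard choice $d_t=\big((t+1)\log^2(t+2)\big)^{-1}\cdot$—or more simply $d_t = 1/(t+1)^2$ does \emph{not} work since then $d_t^{1/2}=1/(t+1)$ is not summable—wait, that is exactly what we want: $\sum 1/(t+1)^2<\infty$ while $\sum 1/(t+1)=\infty$. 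So $d_t=(t+1)^{-2}$ already does the job. Finally one must check $\kernel$ is a positive-definite kernel: a diagonal kernel with nonnegative diagonal entries is positive-definite because $\sum_{i,j}a_i\kernel(s_i,s_j)a_j=\sum_i a_i^2 d_{s_i}\ge0$ (after grouping repeated indices), so this is immediate.

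Assembling these pieces: define $\kernel(s,t):=\onefun_{\{0\}}(s-t)/(t+1)^2$ on $\Zbb_+\times\Zbb_+$, verify positive-definiteness as above, compute $\sum_{s,t}|\kernel(s,t)|=\sum_t (t+1)^{-2}=\pi^2/6<\infty$ so that $\kernel\in\Sscr_1$, and compute $\DSRIvalue(\kernel)=\sum_t (t+1)^{-1}=\infty$ so that $\kernel\notin\DSRIclass$. I do not anticipate a serious obstacle; the only subtlety worth a sentence is making sure the example is stated on $\Tbb=\Zbb_+$ (the analogous construction on $\Rbb_+$ would need a genuinely integrable diagonal, e.g.\ a kernel supported near the diagonal, which is more delicate, so restricting to the discrete-time case as in Theorem~\ref{thm:integrable_not_DSRI}'s phrasing is the cleanest route). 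One may also remark that this kernel is moreover stable, consistent with $\Sscr_1\subseteq\Sstable$, but that is not required by the statement.
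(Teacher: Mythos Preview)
Your proposal is correct and your discrete-time construction is essentially identical to the paper's: the paper also takes the diagonal kernel $\kernel(s,t)=\onefun_{\{0\}}(s-t)/(1+s)^2$ on $\Zbb_+\times\Zbb_+$, verifies positive-definiteness by the same grouping argument, and computes $\sum_{s,t}|\kernel(s,t)|=\pi^2/6$ versus $\DSRIvalue(\kernel)=\sum_s 1/(1+s)=\infty$. The only difference is that the paper goes on to build a continuous-time counterexample as well (by modulating the discrete kernel with a compactly supported bump $g(s)=f(s-\floor{s})$ to get $\kernelh(s,t)=\kernel(\floor{s},\floor{t})g(s)g(t)$ on $\Rbb_+\times\Rbb_+$), whereas you explicitly restrict to $\Tbb=\Zbb_+$; since the theorem only asserts existence, your choice is sufficient.
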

The following corollary is a direct result of Theorem~\ref{thm:integrable_not_DSRI} and the fact that any integrable kernel is stable \cite{pillonetto2014kernel}.
\begin{corollary}\label{cor:stable_not_DSRI}
	There exists a stable kernel which is not a DSRI kernel.
\end{corollary}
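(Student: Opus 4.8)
The plan is to deduce the statement directly from Theorem~\ref{thm:integrable_not_DSRI}, exploiting the known inclusion $\Sscr_1\subseteq\Sstable$ recorded just above Definition~\ref{def:integrable_kernel}. First I would invoke Theorem~\ref{thm:integrable_not_DSRI} to fix a positive-definite kernel $\kernel:\Tbb\times\Tbb\to\Rbb$ that is integrable, i.e., $\kernel\in\Sscr_1$, yet fails to be diagonally square root integrable, i.e., $\kernel\notin\DSRIclass$. All the substance of the argument is packaged inside that theorem; the only remaining task is to transfer the integrability of $\kernel$ into stability.

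For that step I would appeal to the established fact that every integrable kernel is stable, in symbols $\Sscr_1\subseteq\Sstable$ \cite{pillonetto2014kernel,bisiacco2020mathematical}. Applying this inclusion to the kernel $\kernel$ supplied by Theorem~\ref{thm:integrable_not_DSRI} yields $\kernel\in\Sstable$. Since $\kernel\notin\DSRIclass$ holds by construction, the kernel $\kernel$ is simultaneously stable and not DSRI, which is precisely the asserted existence. I would present this as a two-sentence composition of the two cited facts, with no additional computation required.

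Because the result is a one-line corollary, there is no genuine obstacle at this level: the heavy lifting resides entirely in producing the witnessing kernel within Theorem~\ref{thm:integrable_not_DSRI}. The only point worth double-checking is that the inclusion $\Sscr_1\subseteq\Sstable$ is stated over the same time-index set $\Tbb$ (either $\Zbb_+$ or $\Rbb_+$) on which the kernel of Theorem~\ref{thm:integrable_not_DSRI} is built, so that the two facts compose without a mismatch of domains. This is immediate, since both the inclusion and the construction are formulated for a common $\Tbb$, and hence the corollary follows at once.
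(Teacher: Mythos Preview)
Your proposal is correct and mirrors the paper's own justification exactly: the corollary is stated there as a direct consequence of Theorem~\ref{thm:integrable_not_DSRI} together with the inclusion $\Sscr_1\subseteq\Sstable$. There is nothing to add or change.
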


In \cite{bisiacco2020mathematical}, other categories of positive-definite kernels are considered. 
The positive-definite kernel $\kernel:\Tbb\times\Tbb\to\Rbb$ is said to be \emph{finite-trace} if we have 
\begin{equation}\label{eqn:finite_trace_kernels}
 	\ra{1.5}\left\{
 	\begin{array}{ll} 
		\displaystyle\Sum_{s\in\Zbb_+} \kernel(s,s)<\infty,
		& 
		\text{\quad  if } \Tbb=\Zbb_+,\\	
		\displaystyle\int_{\Rbb_+}\ \kernel(s,s)\drm s <\infty,  
		& 
		\text{\quad if } \Tbb=\Rbb_+.\\
	\end{array}
	\right.
\end{equation}
Similarly, it is called a \emph{squared integrable} kernel if
\begin{equation}\label{eqn:squared_integrable_kernels}
 	\ra{1.5}\left\{
 	\begin{array}{ll} 
		\displaystyle\sum_{s\in\Zbb_+}\displaystyle\sum_{t\in\Zbb_+} \kernel(s,t)^2<\infty,
        & 
        \text{\quad if } \Tbb=\Zbb_+,\\	
		\displaystyle\int_{\Rbb_+}\displaystyle\int_{\Rbb_+}\ \kernel(s,t)^2\drm s\, \drm t <\infty,  
		& 
		\text{\quad if } \Tbb=\Rbb_+.\\
	\end{array}
	\right.
\end{equation}
The class of finite-trace kernels and the class of squared integrable kernels are denoted by $\Sfinitetrace$ and $\Sscr_2$, respectively \cite{bisiacco2020mathematical}.
Based on the above discussion and \cite{bisiacco2020mathematical}, we have
\begin{equation}
    \DSRIclass
    \subset
    \Sscr_1
    \subset
    \Sstable
    \subset
    \Sfinitetrace
    \subset
    \Sscr_2,
\end{equation}
where all of the inclusions are strict. 

See Figure \ref{KRI:fig:inclusion_classes} for an illustration of the discussion presented in the current section and the previous section. One should compare this figure with Figure 1 in \cite{bisiacco2020mathematical}.
\begin{figure}[t]
	\centering
	\includegraphics[width=0.48\textwidth]{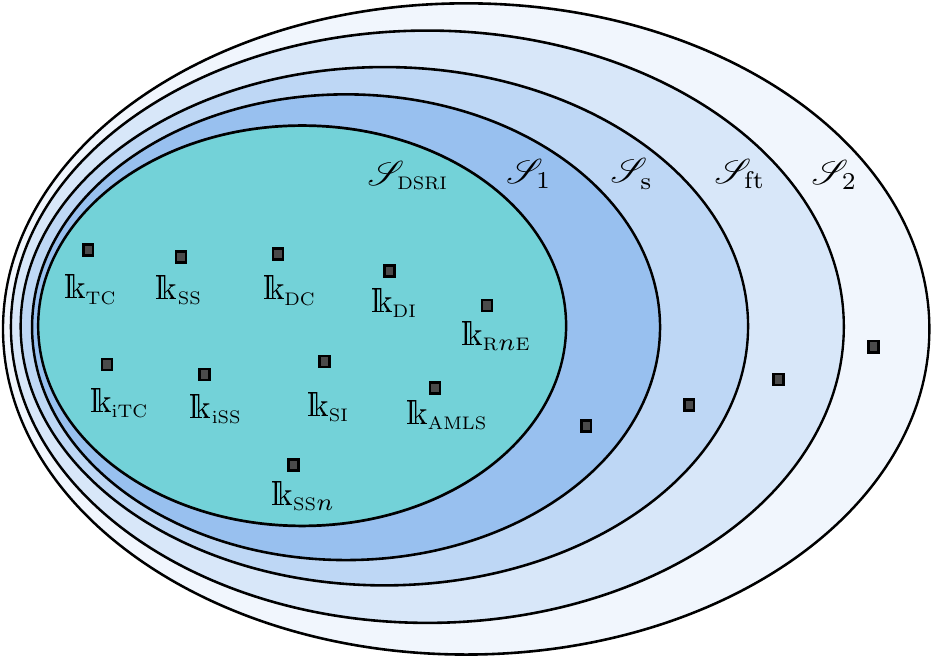}
	\caption{Illustration of the inclusion features for different kernel categories discussed in Section~\ref{sec:DSRI_well_known_kernels} and Section~\ref{sec:DSRI_stability_integrability}}.
	\label{KRI:fig:inclusion_classes}
	\vspace{0mm}
\end{figure}

\section{Operator Continuity and DSRI Kernels}
\label{sec:DSRI_RKHS}
In this section, we study certain topological features of the RKHSs equipped with DSRI kernels, namely the continuity of linear operators defined on them.

We recall 
that with respect to each positive-definite kernel, a Hilbert space is defined uniquely \cite{aronszajn1950theory}.
More precisely, based on the Moore-Aronszajn theorem, these Hilbert spaces are exactly the ones where the evaluation functionals are bounded \cite{aronszajn1950theory,berlinet2011reproducing}.
\begin{theorem}[\cite{berlinet2011reproducing}]
\label{thm:kernel_to_RKHS_def} 
	Given a positive-definite kernel $\,\kernel:\Tbb\times\Tbb\to \Rbb$, there exists a unique Hilbert space $\Hk\subseteq \Rbb^{\Tbb}$ with inner product $\inner{\cdot}{\cdot}_{\Hk}$, referred to as the \emph{RKHS with kernel} $\kernel$, where for each $t\in\Tbb$, we have \\
	\emph{i)} $ \kernel_t\in\Hk$, and\\
	\emph{ii)} $g_t = \inner{\vc{g}}{ \kernel_{t}}_{\Hk}$, for all $\vc{g}=(g_s)_{s\in\Tbb}\in\Hk$.\\
	The second feature is called the {\em reproducing property}.
\end{theorem}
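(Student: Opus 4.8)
The plan is to follow the classical Moore--Aronszajn construction: first build an explicit pre-Hilbert space from the sections of $\kernel$, equip it with an inner product dictated by the kernel values, and then complete it while realizing the completion as genuine functions in $\Rbb^{\Tbb}$. First I would set $\Hscr_0 := \linspan\{\kernel_t : t\in\Tbb\}$, the linear span of the sections, and define a bilinear form on $\Hscr_0$ by declaring $\inner{\kernel_s}{\kernel_t}_{\Hk} := \kernel(s,t)$ and extending bilinearly; concretely, for $f=\sum_i a_i\kernel_{s_i}$ and $g=\sum_j b_j\kernel_{t_j}$ one sets $\inner{f}{g}_{\Hk}=\sum_{i,j}a_ib_j\kernel(s_i,t_j)$. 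The first thing to check is that this value is independent of the chosen representations; this follows from the identity $\inner{f}{\kernel_t}_{\Hk}=\sum_i a_i\kernel(s_i,t)=f(t)$, which shows the form depends only on $f$ and $g$ as functions on $\Tbb$. Symmetry and bilinearity are immediate, and $\inner{f}{f}_{\Hk}=\sum_{i,j}a_ia_j\kernel(s_i,s_j)\ge 0$ is exactly the positive-definiteness hypothesis of Definition~\ref{def:kernel_and_section}, so the form is a positive semi-definite inner product.

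Next I would upgrade the semi-inner product to an inner product by establishing definiteness. The key is the Cauchy--Schwarz inequality for semi-inner products applied to the reproducing identity above: $|f(t)|^2 = |\inner{f}{\kernel_t}_{\Hk}|^2 \le \inner{f}{f}_{\Hk}\,\kernel(t,t)$. Hence $\inner{f}{f}_{\Hk}=0$ forces $f(t)=0$ for every $t\in\Tbb$, i.e. $f$ is the zero function, so $\Hscr_0$ is a genuine pre-Hilbert space. I would then take $\Hk$ to be its abstract completion. The crucial step is to realize this completion concretely as a subspace of $\Rbb^{\Tbb}$ rather than as equivalence classes of Cauchy sequences: for any Cauchy sequence $(f_n)$ in $\Hscr_0$, the same inequality gives $|f_n(t)-f_m(t)|\le \|f_n-f_m\|_{\Hk}\,\kernel(t,t)^{1/2}$, so the pointwise limits $f(t):=\lim_n f_n(t)$ exist for every $t$. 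Assigning to each completion element the resulting function $f\in\Rbb^{\Tbb}$ yields a linear, inner-product-preserving map; its injectivity (distinct abstract limits give distinct functions) again rests on the evaluation bound. This identification is the main technical obstacle, since one must verify that the pointwise limit is compatible with the abstract Hilbert-space limit and that the embedded image is itself complete.

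With $\Hk\subseteq\Rbb^{\Tbb}$ in hand, properties (i) and (ii) follow quickly: each $\kernel_t$ already lies in $\Hscr_0\subseteq\Hk$, giving (i), and the reproducing property $g_t=\inner{\vc{g}}{\kernel_t}_{\Hk}$ holds on $\Hscr_0$ by construction and extends to all $\vc{g}\in\Hk$ by continuity of the inner product together with density of $\Hscr_0$. For uniqueness, suppose $\Hscr'\subseteq\Rbb^{\Tbb}$ is any Hilbert space satisfying (i) and (ii). Then (i) forces $\kernel_t\in\Hscr'$, so $\Hscr_0\subseteq\Hscr'$, and (ii) forces $\inner{\kernel_s}{\kernel_t}_{\Hscr'}=\kernel(s,t)$, so the inner product of $\Hscr'$ agrees with that of $\Hk$ on $\Hscr_0$. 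Finally, any $\vc{g}\in\Hscr'$ orthogonal to every $\kernel_t$ satisfies $g_t=\inner{\vc{g}}{\kernel_t}_{\Hscr'}=0$ for all $t$, hence $\vc{g}=0$; thus $\Hscr_0$ is dense in $\Hscr'$, and a dense subspace carrying a prescribed inner product determines the completion uniquely, giving $\Hscr'=\Hk$.
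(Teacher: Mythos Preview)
Your argument is the classical Moore--Aronszajn construction and is correct. The paper does not supply its own proof of this theorem; it is quoted as a background result with a citation to \cite{berlinet2011reproducing}, so there is no in-paper proof to compare against.
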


In the context of system identification, the RKHSs endowed with the stable kernels are of special interest due to their particular feature reviewed in the following theorem.
\begin{theorem}[\cite{pillonetto2014kernel,chen2018stability,carmeli2006vector}]\label{thm:kernel_stability}
	Let $\kernel:\Tbb\times\Tbb\to \Rbb$ be a  positive-definite kernel. Then, $\Hk\subseteq\Lscr^1$  if and only if $\kernel$ is a stable kernel. In this case, $\Hk$ is called a \emph{stable RKHS}.
\end{theorem}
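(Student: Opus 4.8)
The plan is to convert the set inclusion $\Hk\subseteq\Lscr^1$ into a single bounded-operator statement and then match that statement, term by term, with the stability condition \eqref{eqn:stability_kernel_R_+_Z_+}; the two workhorses are the closed graph theorem and the Riesz representation theorem in $\Hk$. Throughout, for $\vc{u}=(u_s)_{s\in\Tbb}\in\Lscr^\infty$ I write $h_{\vc{u}}(t):=\int_{\Tbb}u_s\kernel(t,s)\,\drm\mu(s)$ (a series when $\Tbb=\Zbb_+$), so that \eqref{eqn:stability_kernel_R_+_Z_+} reads exactly as ``$h_{\vc{u}}\in\Lscr^1$ for every $\vc{u}\in\Lscr^\infty$''.

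\emph{($\Rightarrow$)} Assume $\Hk\subseteq\Lscr^1$. First I would argue that the inclusion $\iota:\Hk\hookrightarrow\Lscr^1$ is bounded by the closed graph theorem: if $\vc{g}_n\to\vc{g}$ in $\Hk$ then $g_n(t)\to g(t)$ for every $t$ because the evaluation functionals on $\Hk$ are bounded (Theorem~\ref{thm:kernel_to_RKHS_def}), while if moreover $\iota\vc{g}_n\to\vc{f}$ in $\Lscr^1$ then a subsequence converges $\mu$-a.e.\ to $f$; hence $\vc{f}=\vc{g}$ in $\Lscr^1$, the graph is closed, and $B:=\|\iota\|<\infty$. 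For fixed $\vc{u}\in\Lscr^\infty$ the functional $\vc{g}\mapsto\int_{\Tbb}g_t u_t\,\drm\mu(t)$ is well defined on $\Hk$ (as $\vc{g}\in\Lscr^1$) and satisfies $|\,\cdot\,|\le\|\vc{u}\|_\infty\|\vc{g}\|_1\le B\|\vc{u}\|_\infty\|\vc{g}\|_{\Hk}$, so by Riesz there is $\vc{w}_{\vc{u}}\in\Hk$ representing it with $\|\vc{w}_{\vc{u}}\|_{\Hk}\le B\|\vc{u}\|_\infty$. Evaluating the representation at $\vc{g}=\kernel_t$ and using the reproducing property gives $(w_{\vc{u}})_t=\inner{\kernel_t}{\vc{w}_{\vc{u}}}_{\Hk}=\int_{\Tbb}\kernel(t,s)u_s\,\drm\mu(s)=h_{\vc{u}}(t)$, the last integral being absolutely convergent since $\kernel_t\in\Hk\subseteq\Lscr^1$. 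Thus $h_{\vc{u}}=\vc{w}_{\vc{u}}\in\Lscr^1$ and $\|h_{\vc{u}}\|_1\le B\|\vc{w}_{\vc{u}}\|_{\Hk}\le B^2\|\vc{u}\|_\infty<\infty$, which is precisely \eqref{eqn:stability_kernel_R_+_Z_+}.

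\emph{($\Leftarrow$)} Assume $\kernel$ is stable. The same closed graph reasoning applied to the linear map $\vc{u}\mapsto h_{\vc{u}}$ from $\Lscr^\infty$ into $\Lscr^1$ shows it is bounded, say $\|h_{\vc{u}}\|_1\le C\|\vc{u}\|_\infty$. The heart of this direction is to upgrade this to $h_{\vc{u}}\in\Hk$ with $\|h_{\vc{u}}\|_{\Hk}^2\le C\|\vc{u}\|_\infty^2$; concretely one verifies the Aronszajn membership criterion $\big(\sum_i c_i h_{\vc{u}}(t_i)\big)^2\le C\|\vc{u}\|_\infty^2\sum_{i,j}c_ic_j\kernel(t_i,t_j)$ for all finite families, equivalently that the pairing $(\vc{u},\vc{g})\mapsto\int_{\Tbb}u_s g_s\,\drm\mu(s)$ is jointly continuous on $\Lscr^\infty\times\Hcal_0$, where $\Hcal_0:=\operatorname{span}\{\kernel_t:t\in\Tbb\}$ (dense in $\Hk$) carries the $\Hk$-norm, equivalently that $s\mapsto u_s\kernel_s$ is Pettis-integrable in $\Hk$ with $\int_{\Tbb}u_s\kernel_s\,\drm\mu(s)=h_{\vc{u}}$. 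Once this is in place, $\|h_{\vc{u}}\|_{\Hk}^2=\int_{\Tbb}u_s h_{\vc{u}}(s)\,\drm\mu(s)\le\|\vc{u}\|_\infty\|h_{\vc{u}}\|_1\le C\|\vc{u}\|_\infty^2$ by the reproducing property, and for any $\vc{g}\in\Hk$ we get $\inner{\vc{g}}{h_{\vc{u}}}_{\Hk}=\int_{\Tbb}u_t g_t\,\drm\mu(t)$; taking the supremum over $\|\vc{u}\|_\infty\le1$ (attained at $\vc{u}=\operatorname{sgn}\vc{g}$) yields $\|\vc{g}\|_1=\sup_{\|\vc{u}\|_\infty\le1}\inner{\vc{g}}{h_{\vc{u}}}_{\Hk}\le\|\vc{g}\|_{\Hk}\sup_{\|\vc{u}\|_\infty\le1}\|h_{\vc{u}}\|_{\Hk}\le\sqrt{C}\,\|\vc{g}\|_{\Hk}<\infty$, so $\vc{g}\in\Lscr^1$ and $\Hk\subseteq\Lscr^1$.

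The main obstacle is exactly that ``upgrade'' step in the converse: showing $h_{\vc{u}}\in\Hk$ with uniformly controlled norm for a general stable $\kernel$ that need not be DSRI. When $\kernel$ is DSRI the integral $\int_{\Tbb}u_s\kernel_s\,\drm\mu(s)$ is an honest Bochner integral, since $\int_{\Tbb}|u_s|\,\|\kernel_s\|_{\Hk}\,\drm\mu(s)=\int_{\Tbb}|u_s|\,\kernel(s,s)^{1/2}\,\drm\mu(s)\le\|\vc{u}\|_\infty\,\DSRIvalue(\kernel)<\infty$, and then $\|h_{\vc{u}}\|_{\Hk}\le\|\vc{u}\|_\infty\DSRIvalue(\kernel)$ drops out at once (this is also the route to Theorem~\ref{thm:DSRI_stable}); but for stable non-DSRI kernels this sufficient bound fails and one must argue Pettis integrability directly, which in the $\Tbb=\Rbb_+$ case demands careful measure-theoretic bookkeeping — that $h_{\vc{u}}$ is a bona fide measurable function, that $\kernel_t\in\Lscr^1$ for every (not merely $\mu$-a.e.) $t$, and that the Fubini/weak-integral interchanges $\int_{\Tbb}u_s\inner{\vc{g}}{\kernel_s}_{\Hk}\,\drm\mu(s)=\inner{\vc{g}}{\int_{\Tbb}u_s\kernel_s\,\drm\mu(s)}_{\Hk}$ are legitimate. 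For $\Tbb=\Zbb_+$ all of these points are transparent, since $\Lscr^1$ and $\Lscr^\infty$ are sequence spaces, every series in sight converges absolutely, and pointwise finiteness is automatic.
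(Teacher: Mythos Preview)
The paper does not give its own proof of Theorem~\ref{thm:kernel_stability}; the result is quoted from the cited references \cite{pillonetto2014kernel,chen2018stability,carmeli2006vector}, so there is no in-paper argument to compare your proposal against.

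Assessing your proposal on its own merits: the forward direction $(\Rightarrow)$ is correct and essentially the argument one finds in the literature --- closed graph for the inclusion, Riesz for the dual pairing, and the reproducing property to identify the Riesz representer with $h_{\vc{u}}$. For the converse $(\Leftarrow)$ you have the right architecture and you are candid that the ``upgrade'' $h_{\vc{u}}\in\Hk$ is the crux, but your proposal does not actually establish it; you only list equivalent reformulations (Aronszajn criterion, Pettis integrability) and then proceed as if it were done. In the discrete case the step can be closed cleanly by truncation: with $\vc{u}^{(n)}:=\vc{u}\cdot\onefun_{\{0,\ldots,n\}}$ one has $h_{\vc{u}^{(n)}}=\sum_{s=0}^{n}u_s\kernel_s\in\Hk$ and
\[
\|h_{\vc{u}^{(n)}}\|_{\Hk}^2=\sum_{s=0}^{n}u_s\,h_{\vc{u}^{(n)}}(s)\le\|\vc{u}\|_\infty\|h_{\vc{u}^{(n)}}\|_1\le C\|\vc{u}\|_\infty^2,
\]
so a weak limit in $\Hk$ exists and coincides pointwise with $h_{\vc{u}}$ (pointwise convergence uses that stability forces $\kernel_t\in\ell^1$ for every $t$, via $\vc{u}=\operatorname{sgn}\kernel_t$). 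In the continuous case the same scheme works once one handles exactly the measure-theoretic issues you flag --- that $\kernel_t\in L^1$ need only hold for a.e.\ $t$, that $h_{\vc{u}}$ is a genuine measurable function, and that the weak-limit identification goes through; these are the details supplied in \cite{carmeli2006vector}. So your outline is sound, but as written the converse remains a sketch rather than a proof.
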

Given a stable kernel $\kernel$, we know that $\Hk\subseteq\Lscr^1$.
Accordingly, various objects introduced on $\Lscr^1$ can be redefined by restricting them to $\Hk$. Here, one may ask about the inherited properties followed by this restriction.
The main feature of DSRI kernels is that the continuity of operators defined on $\Lscrone$ is inherited when they are restricted to the corresponding RKHS.
\begin{theorem}\label{thm:continuity_inherits}
	Let $\Banach$ be a Banach space equipped with norm $\|\cdot\|_{\Banach}$ and $\mxL:\Lscrone\to\Banach$ be a continuous operator.
	If $\kernel:\Tbb\times\Tbb$ is a DSRI kernel, then $\Hk$ is a linear subspace of $\Lscrone$ and $\mxL:\Hk\to\Banach$ is continuous.
	Moreover, we have
	\begin{equation}\label{eqn:continuity_inherits_thm:norm_ineq}
		\|\mxL\|_{\Lcal(\Hk,\Banach)}
		\le 
		\|\mxL\|_{\Lcal(\Lscrone,\Banach)} \DSRIvalue(\kernel).
	\end{equation}
\end{theorem}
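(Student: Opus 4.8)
The plan is to establish the norm inequality \eqref{eqn:continuity_inherits_thm:norm_ineq} by first controlling the $\Lscrone$-norm of any $\vc{g}\in\Hk$ in terms of its RKHS norm, and then deducing both the containment $\Hk\subseteq\Lscrone$ and the continuity of $\mxL$ as corollaries. The central estimate I would prove is
\begin{equation*}
	\|\vc{g}\|_{1}\le \DSRIvalue(\kernel)\,\|\vc{g}\|_{\Hk},
	\qquad \forall\,\vc{g}\in\Hk.
\end{equation*}
Granting this, linearity of $\Hk$ as a subset of $\Rbb^{\Tbb}$ is immediate from Theorem~\ref{thm:kernel_to_RKHS_def}, the finiteness of $\|\vc{g}\|_1$ shows $\Hk\subseteq\Lscrone$, and for the operator bound we simply write, for $\vc{g}\in\Hk$,
\begin{equation*}
	\|\mxL\vc{g}\|_{\Banach}
	\le \|\mxL\|_{\Lcal(\Lscrone,\Banach)}\,\|\vc{g}\|_1
	\le \|\mxL\|_{\Lcal(\Lscrone,\Banach)}\,\DSRIvalue(\kernel)\,\|\vc{g}\|_{\Hk},
\end{equation*}
and take the supremum over $\|\vc{g}\|_{\Hk}\le 1$.

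The key estimate itself comes from the reproducing property. For each $t\in\Tbb$, Theorem~\ref{thm:kernel_to_RKHS_def}\,(ii) and Cauchy--Schwarz give
\begin{equation*}
	|g_t| = |\inner{\vc{g}}{\kernel_t}_{\Hk}|
	\le \|\vc{g}\|_{\Hk}\,\|\kernel_t\|_{\Hk}
	= \|\vc{g}\|_{\Hk}\,\kernel(t,t)^{\frac12},
\end{equation*}
where the last equality uses $\|\kernel_t\|_{\Hk}^2 = \inner{\kernel_t}{\kernel_t}_{\Hk} = \kernel(t,t)$, again by the reproducing property. Integrating (or summing) this pointwise bound over $t\in\Tbb$ against the measure $\mu$ yields
\begin{equation*}
	\|\vc{g}\|_1
	= \int_{\Tbb} |g_t|\,\drm\mu(t)
	\le \|\vc{g}\|_{\Hk}\int_{\Tbb}\kernel(t,t)^{\frac12}\,\drm\mu(t)
	= \DSRIvalue(\kernel)\,\|\vc{g}\|_{\Hk},
\end{equation*}
which is exactly the claimed estimate and is finite precisely because $\kernel$ is DSRI. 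Measurability of $t\mapsto |g_t|$ is not an issue since $\vc{g}\in\Rbb^{\Tbb}$ means $\vc{g}$ is measurable by the conventions fixed in the Notation section (and in the $\Tbb=\Zbb_+$ case everything is a series of nonnegative terms, so Tonelli/monotone convergence applies trivially).

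The main obstacle is essentially bookkeeping rather than depth: one must be careful that $\Hk$ genuinely sits inside $\Lscrone$ \emph{as a set of functions}, i.e.\ that the abstract RKHS elements of $\Rbb^{\Tbb}$ are $\mu$-integrable and that the inclusion map is the obvious one, so that applying $\mxL$ (a priori only defined on $\Lscrone$) to elements of $\Hk$ makes sense; this is handled by the pointwise bound above together with the identification in Theorem~\ref{thm:kernel_to_RKHS_def} of $\Hk$ as a subspace of $\Rbb^{\Tbb}$. A minor point to state cleanly is that the two cases $\Tbb=\Rbb_+$ and $\Tbb=\Zbb_+$ are handled uniformly once one writes the bound in terms of the measure space $(\Tbb,\GT,\mu)$ introduced in the preliminaries, so that \eqref{eqn:DSRI_R_+_Z_+} is literally $\int_{\Tbb}\kernel(t,t)^{1/2}\,\drm\mu(t)$ in both cases. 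No appeal to Theorem~\ref{thm:kernel_stability} is needed — the DSRI hypothesis gives the containment $\Hk\subseteq\Lscrone$ directly, with the explicit constant $\DSRIvalue(\kernel)$, which is the quantitative refinement the theorem asserts.
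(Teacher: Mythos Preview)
Your proposal is correct and follows essentially the same route as the paper's proof: the reproducing property and Cauchy--Schwarz give the pointwise bound $|g_t|\le \|\vcg\|_{\Hk}\kernel(t,t)^{1/2}$, integration yields $\|\vcg\|_1\le\DSRIvalue(\kernel)\|\vcg\|_{\Hk}$, and the operator-norm inequality follows immediately. The only cosmetic difference is that the paper invokes Theorem~\ref{thm:DSRI_stable} for the inclusion $\Hk\subseteq\Lscrone$ and then treats only $\Tbb=\Rbb_+$ explicitly, whereas you derive the inclusion directly from the pointwise bound and handle both cases uniformly via the measure~$\mu$.
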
 
Given a Banach space $\Banach$ with norm $\|\cdot\|_{\Banach}$, we denote by $\Lscrinfty(\Tbb;\Banach)$ the space of $\Banach$-valued Bochner measurable functions where the essential supremum of their norm in $\Banach$ is bounded, i.e., for any $\vcv=(v_t)_{t\in\Tbb}\in\Lscrinfty(\Tbb;\Banach)$, we have $\esssup_{t\in\Tbb}\|v_t\|_{\Banach}<\infty$ \cite{mikusinski1978Bochner}.
\begin{theorem}\label{thm:Lv_continuity}
Let $\,\vcv$ be an arbitrary element in $\Lscrinfty(\Tbb;\Banach)$ and $\kernel:\Tbb\times\Tbb$ be a positive-definite kernel. 
Define an operator $\mxL:\Hk\to\Banach$ as follows
\begin{equation}\label{eqn:Lv_operator}
	\mxL(\vcg) := 
	\begin{cases}
		\displaystyle\sum_{s\in \Zbb_+}g_tv_t, 
		& \text{ if } \Tbb=\Zbb_+,\\	
		\displaystyle\int_{\Rbb_+}\!\!g_tv_t \ \!\drm t, 
		& \text{ if } \Tbb=\Rbb_+,\\
	\end{cases}
\end{equation}
for any $\vcg=(g_t)_{t\in\Tbb}\in\Hk$.
If $\kernel$ is a DSRI kernel, then $\mxL$ is a continuous linear operator.
\end{theorem}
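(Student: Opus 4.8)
The plan is to recognize the operator $\mxL$ in \eqref{eqn:Lv_operator} as the restriction to $\Hk$ of a continuous linear operator already defined on $\Lscrone$, and then invoke Theorem~\ref{thm:continuity_inherits}. Concretely, for $\vcv\in\Lscrinfty(\Tbb;\Banach)$, define $\tilde{\mxL}:\Lscrone\to\Banach$ by the same formula, i.e., $\tilde{\mxL}(\vch)=\sum_{t\in\Zbb_+}h_tv_t$ when $\Tbb=\Zbb_+$ and $\tilde{\mxL}(\vch)=\int_{\Rbb_+}h_tv_t\,\drm t$ when $\Tbb=\Rbb_+$. First I would check that $\tilde{\mxL}$ is well-defined on all of $\Lscrone$: the integrand $t\mapsto h_tv_t$ is Bochner measurable (product of a scalar measurable function and a Bochner measurable $\Banach$-valued function), and $\int_{\Tbb}\|h_tv_t\|_{\Banach}\,\drm\mu(t)\le \big(\esssup_{t\in\Tbb}\|v_t\|_{\Banach}\big)\,\|\vch\|_1 = \|\vcv\|_{\Lscrinfty(\Tbb;\Banach)}\|\vch\|_1<\infty$, so the Bochner integral (resp.\ sum) converges absolutely. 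This same estimate shows $\tilde{\mxL}$ is bounded with $\|\tilde{\mxL}\|_{\Lcal(\Lscrone,\Banach)}\le \|\vcv\|_{\Lscrinfty(\Tbb;\Banach)}$, hence continuous; linearity is immediate from linearity of the integral/sum.

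Next, since $\kernel$ is DSRI, Theorem~\ref{thm:continuity_inherits} gives that $\Hk$ is a linear subspace of $\Lscrone$ and that the restriction $\tilde{\mxL}|_{\Hk}:\Hk\to\Banach$ is continuous, with $\|\tilde{\mxL}|_{\Hk}\|_{\Lcal(\Hk,\Banach)}\le \|\tilde{\mxL}\|_{\Lcal(\Lscrone,\Banach)}\,\DSRIvalue(\kernel)$. It remains only to observe that $\tilde{\mxL}|_{\Hk}$ coincides with the operator $\mxL$ defined in \eqref{eqn:Lv_operator}: for $\vcg=(g_t)_{t\in\Tbb}\in\Hk\subseteq\Lscrone$, both are given by the identical formula $\sum_{t}g_tv_t$ (resp.\ $\int g_tv_t\,\drm t$), which is meaningful precisely because $\vcg\in\Lscrone$. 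Thus $\mxL$ is a continuous linear operator, and in fact one obtains the quantitative bound $\|\mxL\|_{\Lcal(\Hk,\Banach)}\le \|\vcv\|_{\Lscrinfty(\Tbb;\Banach)}\,\DSRIvalue(\kernel)$ as a byproduct.

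The only genuinely delicate point is the Bochner-integrability bookkeeping in the $\Tbb=\Rbb_+$ case: one must confirm that $t\mapsto g_tv_t$ is strongly measurable and that the scalar bound $\int_{\Rbb_+}|g_t|\,\|v_t\|_{\Banach}\,\drm t<\infty$ legitimately yields Bochner integrability — this is exactly the Bochner-space analogue of absolute convergence and is standard \cite{mikusinski1978Bochner}, so I expect no real obstacle there. Everything else is a direct application of the already-established Theorem~\ref{thm:continuity_inherits}; no additional structure of the kernel beyond the DSRI property is needed.
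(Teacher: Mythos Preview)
Your proposal is correct and follows essentially the same approach as the paper: extend $\mxL$ to all of $\Lscrone$, verify Bochner integrability and the bound $\|\mxL(\vcg)\|_{\Banach}\le\|\vcg\|_1\,\esssup_{t\in\Tbb}\|v_t\|_{\Banach}$ to get continuity on $\Lscrone$, and then invoke Theorem~\ref{thm:continuity_inherits}. The paper likewise cites \cite{mikusinski1978Bochner} for the Bochner-integrability step, so even your ``only delicate point'' is handled identically.
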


Theorem \ref{thm:continuity_inherits} and Theorem \ref{thm:Lv_continuity} allow one to transfer different existing results for BIBO stable impulse responses to RKHS $\Hk$. The following corollaries are examples of this.
\begin{corollary}\label{cor:Lut_bounded}
	Let $\kernel$ be a DSRI kernel, $\vcu\in\Lscrinfty$ be a bounded signal and $t\in\Tbb_{\pm}$. 
	Define the \emph{convolution} operator $\Lu{t}:\Hk\to\Rbb$ as
    \begin{equation}\label{eqn:Lu_operator}
	\Lu{t}(\vc{g}) := 
	\begin{cases}
		\displaystyle\sum_{s\in \Zbb_+}g_s u_{t-s}, 
		& \text{ if } \Tbb=\Zbb_+,\\	
		\displaystyle\int_{\Rbb_+} g_s u_{t-s}\drm s,  
		& \text{ if } \Tbb=\Rbb_+,\\
	\end{cases}
    \end{equation}
    for any $\vc{g}=(g_t)_{t\in\Tbb} \in\Hk$.
    Then, $\Lu{t}:\Hk\to\Rbb$ is a continuous linear operator.
\end{corollary}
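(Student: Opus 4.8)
The plan is to recognize Corollary~\ref{cor:Lut_bounded} as a direct specialization of Theorem~\ref{thm:Lv_continuity} with the Banach space taken to be $\Banach=\Rbb$. First I would fix $t\in\Tbb_{\pm}$ and introduce the time-reversed, time-shifted version of the input, namely $\vcv=(v_s)_{s\in\Tbb}$ defined by $v_s:=u_{t-s}$. Since $\Lscrinfty$ is defined over $\Zbb$ (resp.\ $\Rbb$) rather than over $\Zbb_+$ (resp.\ $\Rbb_+$), and since $t-s$ ranges inside $\Zbb$ (resp.\ $\Rbb$) as $s$ ranges over $\Tbb$, the quantity $u_{t-s}$ is well defined for every (resp.\ almost every) $s\in\Tbb$. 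The affine map $s\mapsto t-s$ is a bijection of $\Tbb$ onto $t-\Tbb$ that preserves the counting measure (resp.\ the Lebesgue measure, its Jacobian being $-1$); hence $s\mapsto u_{t-s}$ is measurable and $\esssup_{s\in\Tbb}|v_s|=\esssup_{s\in\Tbb}|u_{t-s}|\le\|\vcu\|_{\infty}<\infty$, so that $\vcv\in\Lscrinfty(\Tbb;\Rbb)$.

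With this identification, comparing \eqref{eqn:Lu_operator} with \eqref{eqn:Lv_operator} shows that $\Lu{t}(\vcg)=\mxL(\vcg)$ for every $\vcg\in\Hk$: in the discrete case $\sum_{s\in\Zbb_+}g_s u_{t-s}=\sum_{s\in\Zbb_+}g_s v_s$, and in the continuous case $\int_{\Rbb_+}g_s u_{t-s}\drm s=\int_{\Rbb_+}g_s v_s\drm s$. Linearity of $\Lu{t}$ is immediate from linearity of the sum/integral in $\vcg$. Since $\kernel$ is DSRI by hypothesis, Theorem~\ref{thm:Lv_continuity} applies to $\vcv$ and yields that $\mxL=\Lu{t}$ is a continuous linear operator from $\Hk$ to $\Rbb$, which is exactly the assertion.

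There is essentially no hard step here; the corollary is a ``plug in the right $\vcv$'' consequence of Theorem~\ref{thm:Lv_continuity}, and the only point requiring care is checking that $(u_{t-s})_{s\in\Tbb}$ still belongs to $\Lscrinfty(\Tbb;\Rbb)$ with essential supremum bounded by $\|\vcu\|_{\infty}$, which is where the convention that $\Lscrinfty$ is indexed by $\Zbb$ (resp.\ $\Rbb$) is used. If one also wants a quantitative bound, the same reduction combined with Theorem~\ref{thm:continuity_inherits} — applied to the evidently continuous operator $\vcg\mapsto\int_{\Rbb_+}g_s u_{t-s}\drm s$ on $\Lscrone$, whose norm is at most $\|\vcu\|_{\infty}$ — gives $\|\Lu{t}\|_{\Lcal(\Hk,\Rbb)}\le\|\vcu\|_{\infty}\,\DSRIvalue(\kernel)$.
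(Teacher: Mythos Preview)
Your proposal is correct and matches the paper's intended argument: the paper presents Corollary~\ref{cor:Lut_bounded} as an immediate instance of Theorem~\ref{thm:Lv_continuity} (and Theorem~\ref{thm:continuity_inherits}) without giving a separate proof, and your reduction via $v_s:=u_{t-s}\in\Lscrinfty(\Tbb;\Rbb)$ is exactly the intended specialization.
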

Let $\Omega_{\Tbb}$ be defiend as $\Omega_{\Tbb}:=[0,\pi]$ when $\Tbb=\Zbb_+$, and $\Omega_{\Tbb}:=\Rbb_+$ when $\Tbb=\Rbb_+$.
With respect to each $\omega$ in $\Omega_{\Tbb}$, the operators $\Fwr:\Hk\to\Rbb$ and  $\Fwi:\Hk\to\Rbb$ are defined
respectively as
\begin{equation}\label{eqn:Fwr_operator}
	\Fwr(\vc{g}) := 
	\begin{cases}
		\displaystyle\sum_{t\in\Zbb_+} g_t \cos(\omega t), 
		& \text{ if } \Tbb=\Zbb_+,\\	
		\displaystyle\int_{\Rbb_+} g_t\cos(\omega t)\drm t,  
		& \text{ if } \Tbb=\Rbb_+,\\
	\end{cases}
\end{equation}
and
\begin{equation}\label{eqn:Fwi_operator}
	\Fwi(\vc{g}) := 
	\begin{cases}
		- \displaystyle\sum_{t\in\Zbb_+} g_t \sin(\omega t),
        & \text{ if } \Tbb=\Zbb_+,\\	
		- \displaystyle\int_{\Rbb_+}  g_t\sin(\omega t)\drm t,  
		& \text{ if } \Tbb=\Rbb_+,\\
	\end{cases}
\end{equation}
for any $\vc{g}=(g_t)_{t\in\Tbb} \in\Hk$.
Moreover, we define $\Fw:\Hk\to\Cbb$ as $\Fw = \Fwr + \Jimage\Fwi$, where $\Jimage$ denotes imaginary unit.  
One can see that $\Fwi(\vc{g})$ and $\Fwi(\vc{g})$ respectively corresponds to the real and imaginary part of Fourier transform of impulse response $\vcg\in\Hk$ evaluated at frequency $\omega\in\Omega_{\Tbb}$, which is $\Fw(\vc{g})$ .
From Theorem~\ref{thm:Lv_continuity}, we have the following corollary for the introduced operators.
\begin{corollary}\label{cor:Fwr_Fwi_bounded}
	Let $\kernel$ be a DSRI kernel.
    Then, $\Fwr$, $\Fwi$ and $\Fw$ are continuous linear operators, for all $\omega\in\Omega_{\Tbb}$.
\end{corollary}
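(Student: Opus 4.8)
The plan is to recognize Corollary~\ref{cor:Fwr_Fwi_bounded} as a straightforward specialization of Theorem~\ref{thm:Lv_continuity}, with the Banach space $\Banach$ taken to be $\Rbb$ for $\Fwr$ and $\Fwi$, and $\Cbb$ for $\Fw$. Fix $\omega\in\Omega_{\Tbb}$. First I would observe that $\vcv:=(\cos(\omega t))_{t\in\Tbb}$ is a Borel measurable, hence Bochner measurable, $\Rbb$-valued function of $t$, and $\esssup_{t\in\Tbb}|\cos(\omega t)|\le 1<\infty$, so $\vcv\in\Lscrinfty(\Tbb;\Rbb)$; likewise $\vcv:=(-\sin(\omega t))_{t\in\Tbb}\in\Lscrinfty(\Tbb;\Rbb)$. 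Since $\kernel$ is DSRI, Theorem~\ref{thm:continuity_inherits} (equivalently, Theorem~\ref{thm:DSRI_stable} together with Theorem~\ref{thm:kernel_stability}) gives $\Hk\subseteq\Lscrone$, which ensures that for every $\vcg=(g_t)_{t\in\Tbb}\in\Hk$ the sums/integrals in \eqref{eqn:Fwr_operator} and \eqref{eqn:Fwi_operator} converge absolutely, and that the operator $\mxL$ of \eqref{eqn:Lv_operator} built from these two choices of $\vcv$ equals $\Fwr$ and $\Fwi$, respectively. Applying Theorem~\ref{thm:Lv_continuity} to each choice then yields that $\Fwr,\Fwi:\Hk\to\Rbb$ are continuous linear operators.

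For $\Fw$, the simplest route is to note that $\Fw=\Fwr+\Jimage\Fwi$ is, over $\Rbb$, the sum of $\Fwr$ and the composition of $\Fwi$ with the continuous linear embedding $\Rbb\to\Cbb$, $x\mapsto\Jimage x$; a finite linear combination of continuous linear operators is continuous and linear, so $\Fw:\Hk\to\Cbb$ is a continuous linear operator. Alternatively, one may invoke Theorem~\ref{thm:Lv_continuity} once more with $\Banach=\Cbb$ and $\vcv:=(\cos(\omega t)-\Jimage\sin(\omega t))_{t\in\Tbb}$, which lies in $\Lscrinfty(\Tbb;\Cbb)$ since its modulus equals $1$ for all $t$; the associated operator is exactly $\Fw$. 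As all of this holds for arbitrary $\omega\in\Omega_{\Tbb}$, the claim follows for all $\omega\in\Omega_{\Tbb}$.

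I do not expect a genuine obstacle here: the substance of the result is entirely contained in Theorem~\ref{thm:Lv_continuity}, and what remains is the elementary verification that $t\mapsto\cos(\omega t)$ and $t\mapsto\sin(\omega t)$ are bounded measurable functions, together with the bookkeeping needed to identify \eqref{eqn:Fwr_operator}--\eqref{eqn:Fwi_operator} with \eqref{eqn:Lv_operator} for the indicated $\vcv$. The only points deserving a sentence of care are that $\Cbb$, viewed as a real Banach space, falls under the hypotheses of Theorem~\ref{thm:Lv_continuity}, and that for these scalar- and $\Cbb$-valued integrands Bochner measurability coincides with ordinary measurability, so that membership in $\Lscrinfty(\Tbb;\Banach)$ reduces to plain essential boundedness.
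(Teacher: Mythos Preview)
Your proposal is correct and follows exactly the approach the paper intends: the corollary is stated immediately after the phrase ``From Theorem~\ref{thm:Lv_continuity}, we have the following corollary,'' and your argument simply spells out the obvious specialization of Theorem~\ref{thm:Lv_continuity} with $\vcv=(\cos(\omega t))_{t\in\Tbb}$, $\vcv=(-\sin(\omega t))_{t\in\Tbb}$, and their combination.
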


\section{Stable Gaussian Processes}
Let $(\Omega,\GOmega,\Pbb)$ be a probability space, where $\Omega$  is the sample space, $\GOmega$ is the corresponding $\sigma$-algebra, and $\Pbb$ is the probability measure defined on $\GOmega$. 
Given a measurable function $\vcm=(m_t)_{t\in\Tbb}$ and a positive-definite kernel $\kernel:\Tbb\times\Tbb\to\Rbb$, the stochastic process
\begin{equation}
    \vcg:(\Tbb\times\Omega,\GT\otimes\GOmega,\mu\times\Pbb)\to\Rbb
\end{equation}
is called a \emph{Gaussian process} (GP) with mean $\vcm$ and kernel $\kernel$ \cite{berlinet2011reproducing}, denoted by $\GP(\vcm,\kernel)$, when, for  any $n\in\Nbb$ and any $t_1,\ldots,t_n\in\Tbb$, the random vector $[g_{t_1},\ldots,g_{t_n}]^\tr$ has a Gaussian distribution as follows 
\begin{equation}
    \big[g_{t_1},\ldots,g_{t_n}\big]^\tr
    \sim
    \Ncal\Big(\big[m_{t_i}\big]_{i=1}^n,\big[\kernel(t_i,t_j)\big]_{i,j=1}^n\Big).
\end{equation}
The following definition reviews the notion of an interesting class of Gaussian processes in the context of system identification \cite{chiuso2019system}. 
\begin{definition}[\cite{chiuso2019system}]
The Gaussian process $\GP(\vcm,\kernel)$ is said to be \emph{stable} in the BIBO sense if its realizations, also known as  sample paths, are almost surely BIBO stable impulse responses, i.e., $\Pbb[\|\vcg\|_1<\infty]=1$.
\end{definition} 
The importance of stable GPs is according to their role in the Bayesian interpretation of kernel-based impulse response identification. 
Hence, one may ask about the necessary and sufficient conditions for the stability of the Gaussian process $\GP(\vcm,\kernel)$ \footnote{This question has been raised  during workshop ``\emph{Bayesian and Kernel-Based Methods in Learning Dynamical Systems}'', 21${}^\text{\tiny{st}}$ IFAC World Congress, Berlin, Germany, 2020.}.
Part of this question is addressed in \cite{chiuso2019system}, which is reviewed in the following lemma.
\begin{lemma}[\cite{chiuso2019system}]\label{lem:DSRI_GP_stable}
Let $\kernel$ be a positive-definite kernel and $\vcg\sim\GP(\zero,\kernel)$, where $\zero$ denotes the constant zero function. 
If kernel $\kernel$ is DSRI, then we have $\Pbb[\|\vcg\|_1<\infty]=1$. 
\end{lemma}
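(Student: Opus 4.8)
The plan is to show that the expected $\Lscr^1$-norm of a sample path of $\GP(\zero,\kernel)$ is finite and then to conclude almost-sure finiteness. First I would note that, by the very definition of $\GP(\zero,\kernel)$, for each fixed $t\in\Tbb$ the scalar random variable $g_t$ is centered Gaussian with variance $\kernel(t,t)$; consequently $\EE\big[|g_t|\big]=\sqrt{2/\pi}\,\kernel(t,t)^{\frac12}$ by the standard formula for the mean absolute value of a centered Gaussian. (If $\kernel(t,t)=0$ this reads $\EE[|g_t|]=0$, consistent with $g_t=0$ almost surely.)

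Next I would bring in the nonnegative, extended-real-valued random variable $\|\vcg\|_1$, which equals $\sum_{t\in\Zbb_+}|g_t|$ when $\Tbb=\Zbb_+$ and $\int_{\Rbb_+}|g_t|\drm t$ when $\Tbb=\Rbb_+$. Since $\vcg$ is, by hypothesis, a measurable map on $(\Tbb\times\Omega,\GT\otimes\GOmega,\mu\times\Pbb)$, the map $(t,\omega)\mapsto|g_t(\omega)|$ is jointly measurable and nonnegative, so Tonelli's theorem permits the interchange of $\EE$ and integration over $\Tbb$:
\begin{equation*}
\EE\big[\|\vcg\|_1\big]
=\int_{\Tbb}\EE\big[|g_t|\big]\,\drm\mu(t)
=\sqrt{2/\pi}\int_{\Tbb}\kernel(t,t)^{\frac12}\,\drm\mu(t)
=\sqrt{2/\pi}\;\DSRIvalue(\kernel).
\end{equation*}
By the DSRI hypothesis on $\kernel$, the right-hand side is finite.

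Finally, since $\|\vcg\|_1\ge 0$ and $\EE\big[\|\vcg\|_1\big]<\infty$, the random variable $\|\vcg\|_1$ is finite with probability one (otherwise its expectation would be infinite); equivalently $\Pbb[\|\vcg\|_1<\infty]=1$, which is the claim.

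The only step that deserves care — and the closest thing here to an obstacle — is the joint $\GT\otimes\GOmega$-measurability of $(t,\omega)\mapsto g_t(\omega)$ that legitimizes the use of Tonelli. In the discrete case $\Tbb=\Zbb_+$ this is automatic, since $\|\vcg\|_1$ is then a countable sum of measurable functions; in the continuous case it is exactly the measurability built into the definition of the Gaussian process $\vcg$ adopted in this paper, so no separate argument (e.g., passing to a measurable modification) is needed. Everything else reduces to the elementary Gaussian first-moment identity together with Tonelli's theorem.
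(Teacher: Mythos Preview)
Your argument is correct and is the standard one: compute $\EE[\|\vcg\|_1]$ via Tonelli and the Gaussian first-moment identity, obtain $\sqrt{2/\pi}\,\DSRIvalue(\kernel)<\infty$, and conclude almost-sure finiteness. Note, however, that the paper does not supply its own proof of this lemma; it is attributed to an external reference (\cite{chiuso2019system}) and used as a black box, so there is no in-paper argument to compare against---your proof is precisely the expected one.
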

According to Lemma~\ref{lem:DSRI_GP_stable}, the DSRI feature of $\kernel$ is a sufficient condition for the almost sure BIBO stability of $\vcg$ when $\vcg\sim\GP(\zero,\kernel)$.
The following lemma concerns the other direction of Lemma~\ref{lem:DSRI_GP_stable}.
Before proceeding further, we need to present additional definitions. 
Let the function $\Phi:\Rbb_+\to[0,1]$ be defined as
\begin{equation}\label{eqn:Phi_fun}
	\Phi(\delta)
	=
	\frac{1}{(2\pi)^{\frac12}}
	\int_{-\delta}^{\delta}\expe^{-\frac12 x^2}\drm x,
\end{equation} 
for any $\delta\in\Rbb_+$.
Note that $\Phi$ is closely related to the Gaussian error function, i.e., $\Phi(\delta)$ is the probability that the value of a standard Gaussian random variable is in the interval $[-\delta,\delta]$, for any $\delta\in\Rbb_+$.
Moreover, one can see that $\Phi$ is a strictly increasing bijective function, and therefore, it has a well-defined inverse $\Phi^{-1}: [0,1]\to \Rbb_+$, which is also a strictly increasing bijective map.
\begin{lemma}\label{lem:GP_stable}
Let $\kernel$ be a positive-definite kernel and $\vcg\sim\GP(\zero,\kernel)$, where $\zero$ is the constant zero function. 
If $\Pbb[\|\vcg\|_1<\infty]>0$, then $\kernel$ is a DSRI kernel and we have 
$\Pbb[\|\vcg\|_1<\infty]=1$.
\end{lemma}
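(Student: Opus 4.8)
The plan is to combine a zero--one law for Gaussian measures with a Fernique-type integrability estimate and then read off the diagonal bound from Tonelli's theorem. First observe that $\Lscrone$ is a linear subspace of $\Rbb^{\Tbb}$ and that $\vcv\mapsto\|\vcv\|_1$ is a measurable $[0,+\infty]$-valued map: when $\Tbb=\Zbb_+$ it is the countable supremum $\sup_n\sum_{t=0}^n|v_t|$ of cylinder functionals, and when $\Tbb=\Rbb_+$ it is measurable because $\vcg$ is jointly measurable by construction. Hence $\{\|\vcg\|_1<\infty\}$ is the event that the centred Gaussian process $\vcg$ lands in a measurable linear subspace, and the zero--one law for such subspaces forces $\Pbb[\|\vcg\|_1<\infty]\in\{0,1\}$; since it is positive by hypothesis, it equals $1$. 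Continuity of $\Pbb$ from below then yields a finite $M$ with $q:=\Pbb[\|\vcg\|_1\le M]>\tfrac12$, and this strict inequality is the quantitative seed for the rest of the argument.

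Next I would prove a moment bound on truncated norms that is uniform in the truncation. For $N\in\Nbb$, let $S_N$ denote the $\Lscrone$-norm of the restriction of $\vcg$ to $\Tbb\cap[0,N]$; then $S_N\le\|\vcg\|_1$, so $\Pbb[S_N\le M]\ge q>\tfrac12$. Taking an independent copy $\vcg'$ of $\vcg$ and passing to the rotated processes $(\vcg-\vcg')/\sqrt2$ and $(\vcg+\vcg')/\sqrt2$, which are again independent copies of $\vcg$, the triangle inequality applied to the truncated seminorm yields the recursion
\begin{equation*}
	\Pbb[S_N\le s]\;\Pbb[S_N>t]\;\le\;\Pbb\big[S_N>\tfrac{t-s}{\sqrt2}\big]^{2},\qquad 0\le s\le t.
\end{equation*}
Starting from $s=M$ and iterating along $t_{k+1}=M+\sqrt2\,t_k$ with $t_0=M$ produces doubly exponential tail decay $\Pbb[S_N>t_k]\le q\,a_0^{2^{k}}$ with $a_0=(1-q)/q<1$, and hence $\EE[S_N]=\int_0^\infty\Pbb[S_N>t]\,\drm t\le C(M,q)$ for a constant depending only on $M$ and $q$ --- crucially not on $N$. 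Letting $N\to\infty$ and invoking monotone convergence gives $\EE[\|\vcg\|_1]\le C(M,q)<\infty$.

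Finally, since $g_t\sim\Ncal(0,\kernel(t,t))$ one has $\EE|g_t|=\sqrt{2/\pi}\,\kernel(t,t)^{\frac12}$, so Tonelli's theorem (using joint measurability when $\Tbb=\Rbb_+$) gives
\begin{equation*}
	\DSRIvalue(\kernel)\;=\;\sqrt{\tfrac{\pi}{2}}\;\EE\big[\|\vcg\|_1\big]\;<\;\infty,
\end{equation*}
that is, $\kernel\in\DSRIclass$; the equality $\Pbb[\|\vcg\|_1<\infty]=1$ then follows from the zero--one law established in the first step, or, alternatively, from Lemma~\ref{lem:DSRI_GP_stable}. I expect the main obstacle to be the second step: arranging the Fernique iteration so that the resulting moment bound is genuinely uniform over all truncations $N$ --- this uniformity is exactly what lets monotone convergence close the argument --- together with the measure-theoretic bookkeeping required when $\Tbb=\Rbb_+$ (joint measurability of $\vcg$, construction of the independent copy on a product space, and the Tonelli step). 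Verifying the Gaussian zero--one law in continuous time is a secondary point, settled by the standard fact that a measurable linear subspace carries centred Gaussian mass $0$ or $1$.
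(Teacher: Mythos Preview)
Your argument is correct but takes a genuinely different route from the paper. You invoke two pieces of abstract Gaussian machinery --- the zero--one law for measurable linear subspaces and a Fernique-type iteration --- to first upgrade the positive probability to probability one, then obtain the moment bound $\Ebb[\|\vcg\|_1]<\infty$, and finally read off $\DSRIvalue(\kernel)<\infty$ via Tonelli and $\Ebb|g_t|=\sqrt{2/\pi}\,\kernel(t,t)^{1/2}$. The paper instead argues in a completely elementary and self-contained way: from $\Pbb[\|\vcg\|_1\le r]=\gamma>0$ it writes $r\gamma\ge\Ebb[\|\vcg\|_1\onefun_A]=\int_{\Tbb}\Ebb[|g_t|\onefun_A]\,\drm t$, then bounds the integrand pointwise from below by intersecting $A$ with the complement of $B_t=\{|g_t|\le\epsilon\,\kernel(t,t)^{1/2}\}$, choosing $\epsilon=\Phi^{-1}(\gamma/2)$ so that $\Pbb[A\cap B_t^{\mathrm c}]\ge\Pbb[A]-\Pbb[B_t]=\gamma/2$; this gives $\Ebb[|g_t|\onefun_A]\ge\tfrac{\epsilon\gamma}{2}\kernel(t,t)^{1/2}$ directly, and integrating yields $\DSRIvalue(\kernel)\le 2r/\epsilon$. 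The zero--one conclusion then comes only afterwards, via Lemma~\ref{lem:DSRI_GP_stable}. Your approach is more conceptual and situates the result in standard Gaussian-measure theory, which makes it clear that nothing special about $\Lscrone$ is being used beyond it being a measurable seminorm; the paper's approach avoids all external black boxes and is more economical for a reader without that background. Both are valid; note that once you have $\Pbb[\|\vcg\|_1<\infty]=1$ from the zero--one law, the truncation to $S_N$ is not strictly needed --- Fernique applies directly to the a.s.\ finite seminorm $\|\vcg\|_1$ --- but your uniform-in-$N$ bookkeeping is harmless.
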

Following this, we have the main theorem of this section which is implied from Lemma~\ref{lem:DSRI_GP_stable} and Lemma~\ref{lem:GP_stable}.
\begin{theorem}\label{thm:GP_stable}
	Let $\vcm=(m_t)_{t\in\Tbb}$ be a stable impulse response and $\kernel:\Tbb\times\Tbb\to\Rbb$ be a positive-definite kernel. Also, let  $\GP(\vcm,\kernel)$ be the Gaussian process with mean impulse response $\vcm$ and kernel $\kernel$. 
	Then, if $\,\kernel$ is a DSRI kernel, we have $\Pbb[\|\vcg\|_1<\infty]=1$, and if $\,\kernel$ is not a DSRI kernel, we have $\Pbb[\|\vcg\|_1<\infty]=0$. 
\end{theorem}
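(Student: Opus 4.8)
The plan is to reduce the statement to the zero-mean case, which is already settled by Lemma~\ref{lem:DSRI_GP_stable} and Lemma~\ref{lem:GP_stable}, simply by subtracting the deterministic mean.

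First I would observe that if $\vcg\sim\GP(\vcm,\kernel)$ then the centered process $\vch:=\vcg-\vcm$ is a $\GP(\zero,\kernel)$: for any $n\in\Nbb$ and $t_1,\ldots,t_n\in\Tbb$, the vector $[h_{t_1},\ldots,h_{t_n}]^\tr=[g_{t_1},\ldots,g_{t_n}]^\tr-[m_{t_1},\ldots,m_{t_n}]^\tr$ is Gaussian with mean $\zero$ and covariance $[\kernel(t_i,t_j)]_{i,j=1}^n$, so the finite-dimensional distributions of $\vch$ are exactly those of $\GP(\zero,\kernel)$. Hence, without loss of generality, we may realize the process on a common probability space so that $\vcg=\vch+\vcm$ holds pointwise on $\Omega$, with $\vch\sim\GP(\zero,\kernel)$.

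Next, since $\vcm$ is a stable impulse response we have $\|\vcm\|_1<\infty$, so the triangle inequality gives $\|\vcg\|_1\le\|\vch\|_1+\|\vcm\|_1$ and $\|\vch\|_1\le\|\vcg\|_1+\|\vcm\|_1$ for every sample path. Consequently the events $\{\|\vcg\|_1<\infty\}$ and $\{\|\vch\|_1<\infty\}$ coincide, and therefore $\Pbb[\|\vcg\|_1<\infty]=\Pbb[\|\vch\|_1<\infty]$. Now I split into the two cases. If $\kernel$ is DSRI, then Lemma~\ref{lem:DSRI_GP_stable} applied to $\vch\sim\GP(\zero,\kernel)$ yields $\Pbb[\|\vch\|_1<\infty]=1$, hence $\Pbb[\|\vcg\|_1<\infty]=1$. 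If $\kernel$ is not DSRI, assume toward a contradiction that $\Pbb[\|\vcg\|_1<\infty]>0$; then $\Pbb[\|\vch\|_1<\infty]>0$, and Lemma~\ref{lem:GP_stable} forces $\kernel$ to be a DSRI kernel, contradicting the hypothesis. Thus $\Pbb[\|\vcg\|_1<\infty]=0$, completing the proof.

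The only genuinely delicate point is the measurability of $\|\vcg\|_1$ and the fact that $\Pbb[\|\vcg\|_1<\infty]$ is determined by the law of the process — an issue that is most visible in the continuous-time case $\Tbb=\Rbb_+$, where $\|\cdot\|_1$ is an integral and one must work with a jointly measurable version of the process. I expect this to be handled exactly as in the proofs of Lemma~\ref{lem:DSRI_GP_stable} and Lemma~\ref{lem:GP_stable} (where it is either established or assumed), so that the decomposition $\vcg=\vch+\vcm$ above is legitimate; granting that, the remainder of the argument is the routine triangle-inequality reduction just described, and no further obstacle arises.
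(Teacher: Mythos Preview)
Your proposal is correct and follows essentially the same approach as the paper: reduce to the zero-mean case by subtracting the stable mean $\vcm$, then invoke Lemma~\ref{lem:DSRI_GP_stable} and Lemma~\ref{lem:GP_stable}. Your write-up is simply a more detailed version of the paper's two-line argument, including an explicit contrapositive for the non-DSRI case and a remark on measurability that the paper leaves implicit.
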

The following corollary is a direct result of Theorem~\ref{thm:GP_stable} and the definition of (BIBO) stability for the Gaussian processes.
\begin{corollary}
Let the assumptions of Theorem~\ref{thm:GP_stable} holds. Then, $\GP(\vcm,\kernel)$ is stable if and only if $\,\kernel$ is a DSRI kernel.
\end{corollary}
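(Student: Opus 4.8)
The plan is to unwind the definition of (BIBO) stability for a Gaussian process and then read off both implications directly from Theorem~\ref{thm:GP_stable}. By definition, $\GP(\vcm,\kernel)$ is stable exactly when its sample paths are almost surely BIBO stable impulse responses, that is, when $\Pbb[\|\vcg\|_1<\infty]=1$. So the whole task reduces to showing that this probability equals $1$ if and only if $\kernel\in\DSRIclass$, and both halves are immediate consequences of the dichotomy already established in Theorem~\ref{thm:GP_stable}.

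For the ``if'' direction, I would assume $\kernel$ is a DSRI kernel. Then the first assertion of Theorem~\ref{thm:GP_stable} yields $\Pbb[\|\vcg\|_1<\infty]=1$, which is precisely the statement that $\GP(\vcm,\kernel)$ is stable in the BIBO sense. For the ``only if'' direction, I would suppose $\GP(\vcm,\kernel)$ is stable, i.e.\ $\Pbb[\|\vcg\|_1<\infty]=1$, and argue by contraposition: if $\kernel$ were not a DSRI kernel, the second assertion of Theorem~\ref{thm:GP_stable} would force $\Pbb[\|\vcg\|_1<\infty]=0$, contradicting $\Pbb[\|\vcg\|_1<\infty]=1$. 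Hence $\kernel$ must be a DSRI kernel, and the biconditional follows.

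There is essentially no obstacle here, since the analytic content is entirely carried by Theorem~\ref{thm:GP_stable} (itself assembled from Lemma~\ref{lem:DSRI_GP_stable} and Lemma~\ref{lem:GP_stable}); the corollary is just a translation into the vocabulary of the definition of a stable Gaussian process. The one point worth making explicit in the write-up is that the two cases of Theorem~\ref{thm:GP_stable}, namely $\kernel$ being DSRI versus not being DSRI, are exhaustive and mutually exclusive, so that theorem is in effect a zero-one law for $\Pbb[\|\vcg\|_1<\infty]$; this is what guarantees the equivalence holds without any residual intermediate case to rule out.
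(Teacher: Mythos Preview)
Your proposal is correct and matches the paper's approach: the paper states that the corollary is a direct result of Theorem~\ref{thm:GP_stable} together with the definition of (BIBO) stability for Gaussian processes, and your write-up simply spells out that deduction.
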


The theorem and corollary presented here have an interesting interpretation.
For $\vcg=(g_t)_{t\in\Tbb}\sim\GP(\vcm,\kernel)$ and $t\in\Tbb$, we know that $g_t$ is a random variable with Gaussian distribution $\Ncal(m_t,\kernel(t,t))$. 
Accordingly, with respect to each $\epsilon\in(0,1)$,  we can characterize an $\epsilon$ \emph{confidence interval} based on the standard deviation of $g_t$.
More precisely, the $\epsilon$ confidence interval for $g_t$, denoted by $I_{t,\epsilon}$, is defined as
\begin{equation}
 \Ical_{t,\epsilon}
 =
 [m_t-\delta_{\epsilon} \kernel(t,t)^{\frac12},m_t+ \delta_{\epsilon} \kernel(t,t)^{\frac12}],   
\end{equation}
where $\delta_{\epsilon}$ is the positive real scalar specified as $\delta_{\epsilon}=\Phi^{-1}(\epsilon)$. 
Furthermore, let impulse responses $\vcs_{\epsilon}^+$ and $\vcs_{\epsilon}^-$ be defined respectively as
\begin{equation}
\vcs_{\epsilon}^+:=\big(m_t+ \delta_{\epsilon} \kernel(t,t)^{\frac12}\big)_{t\in\Tbb}\, ,    
\end{equation}
and
\begin{equation}
\vcs_{\epsilon}^-:=\big(m_t - \delta_{\epsilon} \kernel(t,t)^{\frac12}\big)_{t\in\Tbb}\, .
\end{equation}
We know that $\vcs_{\epsilon}^+$ and $\vcs_{\epsilon}^-$ corresponds respectively to the upper and lower bounds of the introduced point-wise $\epsilon$ confidence intervals.
Accordingly, we can define an $\epsilon$ \emph{confidence region}, denoted by $\Rcal_{\epsilon}$, as the union of $\epsilon$ confidence intervals $\{\Ical_{t,\epsilon}\,|\,t\in\Tbb\}$, 
i.e., $\Rcal_{\epsilon} = \cup_{t\in\Tbb}\Ical_{t,\epsilon}$.
One can easily see that $\Rcal_{\epsilon}$ is the region between the impulse responses $\vcs_{\epsilon}^+$ and  $\vcs_{\epsilon}^-$ (see Figure \ref{KRI:fig:DSRI}).
Note that due to the definition of $ \Ical_{t,\epsilon}$, we have $\Pbb\big[ g_t\in  \Ical_{t,\epsilon}\big]=\epsilon$, for any $t\in\Tbb$.
However, one should note that this argument does not imply  $\Pbb\big[ \vcg\in  \Rcal_{\epsilon}\big]\ge\epsilon$.
On the other hand, the theorem and corollary say that $\vcg$ is a stable impulse response with probability one, that is $\Pbb[\|\vcg\|_1<\infty]=1$, if and only if, 
the confidence bound impulse responses $\vcs_{\epsilon}^+$ and  $\vcs_{\epsilon}^-$ are stable, or equivalently, the  $\epsilon$ confidence region $\Rcal_{\epsilon}$ has finite area.
Moreover, if the area of $\Rcal_{\epsilon}$ is infinite, then $\vcg$ is an unstable impulse response with probability one, i.e., $\Pbb[\|\vcg\|_1=\infty]=1$.
In Figure \ref{KRI:fig:DSRI}, we have shown  $50$ sample paths of an example Gaussian process, the corresponding mean impulse response $\vcm$, and the confidence bound impulse responses $\vcs_{\epsilon}^+$ and $\vcs_{\epsilon}^-$, where $\epsilon=0.95$. 
\begin{figure}[t]
	\centering
	\includegraphics[width=0.47\textwidth]{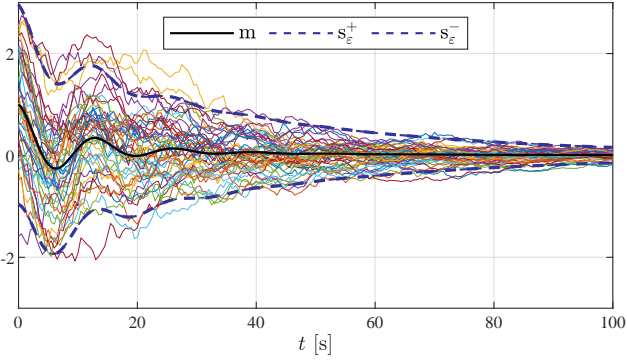}
	\caption{The figure illustrates $50$ sample paths of a Gaussian process, the mean impulse response $\vcm$, and, the $95$\% confidence region between $\vcs_{\epsilon}^+$ and $\vcs_{\epsilon}^-$.}
	\label{KRI:fig:DSRI}
	\vspace{0mm}
\end{figure}

\section{Conclusion}\label{sec:conclusion}
We have investigated the class of diagonally square root integrable 
kernels in this work. It is verified that the category of DSRI kernels includes well-known kernels used in system identification, such as diagonally/correlated, tuned/correlated, stable spline, amplitude-modulated locally stationary, and simulation-induced kernels. 
We have observed that the DSRI kernel category has a cone structure endowed with a partial order. Moreover, this kernel class is a subclass of stable kernels and integrable kernels. 
We have looked into certain fundamental topological properties of the RKHSs with DSRI kernels. More precisely, we have noticed that the continuity of linear operators defined on $\Lscrone$ is inherited when they are restricted to a RKHS equipped with a DSRI kernel.
Furthermore, it has been verified that the realizations of a Gaussian process centered at a stable impulse response are almost surely stable if and only if the corresponding kernel admits the DSRI property.

\appendix
\section{Appendix}\vspace{-1mm} 
\label{sec:AppendixProofs}
\subsection{Proof of Theorem~\ref{thm:dominancy-and-k_RnE_DSRI}} \label{apn:proof:thm:dominancy-and-k_RnE_DSRI}
Part i)
For the case of $\Tbb=\Rbb_+$, one can easily see that
\begin{equation}
	\DSRIvalue(\kernel)
	\le 
	C^{\frac12} 
	\int_{\Rbb_+}\kernelh(t,t)^{\frac12}\,  \drm t 
	= C^{\frac12} \DSRIvalue(\kernelh)<\infty.
\end{equation}
A similar argument holds when $\Tbb=\Zbb_+$.\\
Part ii)
For any  $t_1,\ldots,t_m\in\Tbb$ and $a_1,\ldots,a_m\in\Rbb$, we have
\begin{equation*}	
	\begin{split}
	\sum_{j,k=1}^m a_j\kernelRnE{n}(t_j,t_k)a_k
	&=
	\sum_{j,k=1}^m \sum_{i=1}^n a_ja_k\lambda_i\alpha_i^{\frac12(t_j+t_k)}
	\\
	&=
	\sum_{i=1}^n \lambda_i\Big(\sum_{j=1}^ma_j\alpha_i^{\frac12t_j}\Big)^2
	\ge 0,
	\end{split}
\end{equation*}
which says that $\kernelRnE{n}$ is a positive-definite kernel.
For any $s,t\in\Tbb$, one can see that 
$\kernelRnE{n}(s,t) \le 
\alpha^{\frac12(s+t)}\lambda$,
where $\alpha :=\max_{1\le i \le n}\alpha_i$ and $\lambda:=\sum_{i=1}^n\lambda_i$.
Therefore,  we have
\begin{equation*}
		\DSRIvalue(\kernelRnE{n})
		\le
		\begin{cases}
		\lambda^{\frac12}
		\!
		\displaystyle\int_{\Rbb_+}
		\!\!\!\! \alpha^{\frac12t}\ \! \drm t
		=
		-\frac{2\lambda^{\frac12}}{\ln(\alpha)},
		&
		\text{ if } \Tbb=\Rbb_+,
		\\	
		\lambda^{\frac12}
		\displaystyle\sum_{t\in\Zbb_+}\alpha^{\frac12t}
		=
		\frac{\lambda^{\frac12}}
		{1-\alpha^{\frac12}},
		&
		\text{ if } \Tbb=\Zbb_+,
	\end{cases}
\end{equation*}
which implies that 
$\kernelRnE{n}$ is a DSRI kernel. 
\qed\vspace{-2mm}
\subsection{Proof of Theorem~\ref{thm:sum_product_DSRI}} \label{apn:proof:thm:sum_product_DSRI}
Part i) One can easily see that
\begin{equation}
	\big(\alpha\kernel(t,t)+\beta\kernelh(t,t)\big)^{\frac12}
	\le
	\alpha^{\frac12}\kernel(t,t)^{\frac12}+\beta^{\frac12}\kernelh(t,t)^{\frac12}, 
\end{equation}
for any $t\in\Tbb$.
Accordingly, the proof follows directly from the triangle inequality and Definition~\ref{def:DSRI_kernel}.\\
Part ii) For any $t\in\Tbb$, we have
\begin{equation}
	\big(\kernel(t,t)\kernelh(t,t)\big)^{\frac12}
	\le
	\big(\sup_{t\in\Tbb}\kernelh(t,t)\big)^{\frac12}\kernel(t,t)^{\frac12},
\end{equation}
which implies the claim from the Definition~\ref{def:DSRI_kernel}. 
\qed\vspace{-2mm}
\subsection{DSRI Property for High-order Stable Spline Kernels} 
\label{apn:DSRI_SSn}
Let $\beta$ be a positive real number and 
$(x)_+$ denote the non-negative part of $x$, for any $x\in\Rbb$, that is $(x)_+:=\max\{x,0\}$.
With respect to each $n\in\Zbb_+$, the $n^\nth$-order \emph{stable spline} kernel $\kernelSSn{n}:\Rbb_+\times\Rbb_+\to\Rbb$
is defined as 
\begin{equation}
    \kernelSSn{n}(s,t)
    =\int_0^1 
    \frac{(\expe^{-\beta s}-u)_+^{n-1}(\expe^{-\beta t}-u)_+^{n-1}}{\big((n-1)!\big)^2}\drm u,
\end{equation}
for any $s,t\in\Rbb_+$ \cite{pillonetto2014kernel}.
\begin{theorem}\label{thm:kernel_nSS_DSRI}
The $n^\nth$-order stable spline kernel is DSRI.
\end{theorem}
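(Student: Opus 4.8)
The plan is to reduce the claim to the already-proved facts about rank-one exponential kernels via the diagonal dominance relation of Definition~\ref{def:kerenl_dominance}. First I would record that $\kernelSSn{n}$ is genuinely a positive-definite kernel: setting $\varphi_s(u):=(\expe^{-\beta s}-u)_+^{n-1}/(n-1)!$ for $u\in[0,1]$, its definition reads $\kernelSSn{n}(s,t)=\langle\varphi_s,\varphi_t\rangle_{L^2(0,1)}$, so it is a Gram kernel and, in particular, $\kernelSSn{n}(t,t)\ge 0$, which makes $\DSRIvalue(\kernelSSn{n})$ well defined and nonnegative.

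The one non-routine step is the evaluation of the diagonal. For fixed $t\in\Rbb_+$ (and $\beta>0$) one has $\expe^{-\beta t}\le 1$, hence the integrand $(\expe^{-\beta t}-u)_+^{2n-2}$ is supported on $u\in[0,\expe^{-\beta t}]\subseteq[0,1]$, and the substitution $v=\expe^{-\beta t}-u$ gives
\begin{equation*}
\kernelSSn{n}(t,t)=\frac{1}{((n-1)!)^2}\int_0^{\expe^{-\beta t}}\!\!(\expe^{-\beta t}-u)^{2n-2}\,\drm u=\frac{\expe^{-\beta(2n-1)t}}{(2n-1)\,((n-1)!)^2}.
\end{equation*}
Thus, writing $c:=\big((2n-1)((n-1)!)^2\big)^{-1}\in\Rbb_+$ and $\alpha:=\expe^{-\beta(2n-1)}\in(0,1)$ (using $\beta>0$ and $2n-1\ge 1$), the diagonal of $\kernelSSn{n}$ equals exactly $c\,\alpha^{t}=c\,\alpha^{\frac12(t+t)}$; equivalently, the rank-one exponential kernel $\kernelRnE{1}(\cdot,\cdot\,;c,\alpha)$ diagonally dominates $\kernelSSn{n}$.

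To conclude, I would invoke Theorem~\ref{thm:dominancy-and-k_RnE_DSRI}: part~(ii) yields that $\kernelRnE{1}(\cdot,\cdot\,;c,\alpha)$ is DSRI, and part~(i) then transfers this to $\kernelSSn{n}$ through the diagonal dominance just exhibited. Alternatively, one can finish by a one-line direct computation, $\DSRIvalue(\kernelSSn{n})=\sqrt{c}\int_{\Rbb_+}\alpha^{t/2}\,\drm t=-2\sqrt{c}/\ln\alpha<\infty$. No case split on $\Tbb$ is needed here, since the $n^\nth$-order stable spline kernel is defined only for $\Tbb=\Rbb_+$; the main (indeed only) obstacle is the diagonal evaluation displayed above, after which the result is immediate.
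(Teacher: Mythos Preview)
Your proposal is correct and follows essentially the same route as the paper: compute the diagonal explicitly to get $\kernelSSn{n}(t,t)=\expe^{-(2n-1)\beta t}/\big((2n-1)((n-1)!)^2\big)$, observe this is diagonally dominated by a rank-one exponential kernel with parameter $\alpha=\expe^{-(2n-1)\beta}\in(0,1)$, and invoke Theorem~\ref{thm:dominancy-and-k_RnE_DSRI}. Your added remarks (the Gram-kernel justification of positive-definiteness and the optional direct evaluation of $\DSRIvalue(\kernelSSn{n})$) are fine embellishments but not required.
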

\begin{proof}
For each $t\in\Rbb_+$, one can easily see that
\begin{equation}
\begin{split}
    &\kernelSSn{n}(t,t)
    =
    \int_{0}^{1}
    \frac{(\expe^{-\beta t}-u)_+^{2n-2}}{\big((n-1)!\big)^2}\drm u
    \\
    &\ \ =
    \int_{0}^{\expe^{-\beta t}}
    \!\!\!\!\!\!
    \frac{u^{2n-2}}{\big((n-1)!\big)^2}\drm u
    =
    \frac{\expe^{-(2n-1)\beta t}}{(2n-1)\big((n-1)!\big)^2}.
\end{split}
\end{equation}
Therefore, $\kernelSSn{n}$ is diagonally dominated by kernel 
$\kernelRnE{n}(\cdot,\cdot\,; 1,\expe^{-(2n-1)\beta})$. Thus, due to Theorem~\ref{thm:dominancy-and-k_RnE_DSRI}, 
$\kernelSSn{n}$ is a DSRI kernel.
\end{proof}
\subsection{DSRI Property for Simulation-Induced Kernels} 
\label{apn:DSRI_SI}
Given $\vcv=(v_t)_{t\in\Tbb}$ in $\Lscrone$ with non-negative values,
a stable SISO system
of order $n$ 
with realization $(\mx{A},\vc{b},\vc{c},d)$, and $n$ by $n$ positive-definite matrix $\mx{Q}$, the \emph{simulation-induced} kernel $\kernelSI:\Tbb\times\Tbb\to\Rbb$ is defined such that, for any $s,t\in\Tbb$, we have
\begin{equation}\label{eqn:kSI_DT}
\begin{split}
\kernelSI(s,t)
&=
\vc{c}\mx{A}^s\mx{Q}(\mx{A}^t)^\tr\vc{c}^\tr
+
d^2v_sv_t\onefun_{\{0\}}(s-t)
\\&\!\!\! 
+
v_sd
\sum_{k=0}^{t-1}
\onefun_{\{0\}}(s-k)v_k\vc{b}^\tr(\mx{A}^{t-1-k})^\tr\vc{c}^\tr
\\&\!\!\! 
+
v_td\sum_{k=0}^{s-1}\onefun_{\{0\}}(t-k)v_k
\vc{b}^\tr(\mx{A}^{s-1-k})^\tr\vc{c}^\tr
\\&\!\!\! 
+
\sum_{k=0}^{\min(t,s)-1}
v_k^2\vc{c}\mx{A}^{s-1-k}\vc{b}\vc{b}^\tr(\mx{A}^{t-1-k})^\tr\vc{c}^\tr,
\end{split}    
\end{equation}
when $\Tbb=\Zbb_+$, and 
\begin{equation}\label{eqn:kSI_CT}
\begin{split}
\kernelSI(s,t)
&=
\vc{c}\expe^{\mx{A}s} \mx{Q} (\expe^{\mx{A}t})^\tr \vc{c}^\tr
\\&\!\!\! \!\!\!\!\!\!\!\!
+
\int_0^{\min(s,t)}v_{\tau}^2
\vc{c}\expe^{\mx{A}{(s-\tau)}}\vc{b}
\vc{b}^\tr(\expe^{\mx{A}{(t-\tau)}})^\tr\vc{c}^\tr\, \drm\tau,
\end{split}    
\end{equation}
when $\Tbb=\Rbb_+$ \cite{chen2018kernel}. 
\begin{theorem}\label{thm:kernel_SI_DSRI}
  Let assume that there exist $\gamma_1,\gamma_2\in\Rbb_+$ and $\alpha\in[0,1)$ such that, for any $t\in\Tbb$, 
  we have
  $\|\Phi_t\|\le \gamma_1\, \alpha^t$ and $v_t^2 \le \gamma_2\, \alpha^t$, where $\Phi_t$ denotes matrix $\mx{A}^t$, when $\Tbb=\Zbb_+$, and matrix $\expe^{\mx{A}t}$, when $\Tbb=\Rbb_+$.
  Then, $\kernelSI$ is a DSRI kernel.
\end{theorem}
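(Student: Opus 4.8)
The plan is to reduce the statement to the diagonal‑dominancy part of Theorem~\ref{thm:dominancy-and-k_RnE_DSRI}; concretely, I would exhibit a constant $C\in\Rbbp$ such that
\[
\kernelSI(t,t)\le C\,\alpha^{t}=\kernelRnE{1}(t,t\,;C,\alpha),\qquad\forall\,t\in\Tbb,
\]
where $\kernelRnE{1}(\cdot,\cdot\,;C,\alpha)$ is the rank‑$1$ exponential kernel of \eqref{eqn:RnE_kernel}. Since $\kernelSI$ is a positive‑definite kernel \cite{chen2018kernel} (in particular $\kernelSI(t,t)\ge 0$), this inequality says precisely that $\kernelRnE{1}(\cdot,\cdot\,;C,\alpha)$ diagonally dominates $\kernelSI$; the former is DSRI by Theorem~\ref{thm:dominancy-and-k_RnE_DSRI}(ii), so Theorem~\ref{thm:dominancy-and-k_RnE_DSRI}(i) then yields $\kernelSI\in\DSRIclass$. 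By the convention $0^0=1$, if the hypotheses hold with $\alpha=0$ they also hold with any $\alpha'\in(0,1)$ in discrete time, and $\alpha=0$ cannot occur in continuous time, so there is no loss in assuming $\alpha\in(0,1)$.

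The first step is to specialise \eqref{eqn:kSI_DT} and \eqref{eqn:kSI_CT} to $s=t$. In the discrete‑time formula the two ``mixed'' sums carrying the factor $\onefun_{\{0\}}(t-k)$ with $k\in\{0,\dots,t-1\}$ vanish identically, leaving
\[
\kernelSI(t,t)=\vc{c}\mx{A}^t\mx{Q}(\mx{A}^t)^\tr\vc{c}^\tr+d^2 v_t^2+\sum_{k=0}^{t-1}\big(\vc{c}\mx{A}^{t-1-k}\vc{b}\big)^2 v_k^2,
\]
and similarly $\kernelSI(t,t)=\vc{c}\expe^{\mx{A}t}\mx{Q}(\expe^{\mx{A}t})^\tr\vc{c}^\tr+\int_0^t\big(\vc{c}\expe^{\mx{A}(t-\tau)}\vc{b}\big)^2 v_\tau^2\,\drm\tau$ in continuous time. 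The second step is to bound the three contributions using $\norm{\Phi_t}\le\gamma_1\alpha^t$, $v_t^2\le\gamma_2\alpha^t$, submultiplicativity of the operator norm, and $\alpha^{2t}\le\alpha^t$: the first term is $\le\norm{\vc{c}}^2\norm{\mx{Q}}\gamma_1^2\alpha^t$, the second is $\le d^2\gamma_2\alpha^t$, and for the convolution‑type term the key manipulation is the index change $j=t-1-k$ (respectively the substitution $\sigma=t-\tau$), giving
\[
\sum_{k=0}^{t-1}\norm{\vc{c}}^2\norm{\vc{b}}^2\gamma_1^2\gamma_2\,\alpha^{2(t-1-k)}\alpha^{k}=\norm{\vc{c}}^2\norm{\vc{b}}^2\gamma_1^2\gamma_2\,\alpha^{t-1}\!\sum_{j=0}^{t-1}\alpha^{j}\le\frac{\norm{\vc{c}}^2\norm{\vc{b}}^2\gamma_1^2\gamma_2}{\alpha(1-\alpha)}\,\alpha^{t},
\]
and in continuous time $\int_0^t\norm{\vc{c}}^2\norm{\vc{b}}^2\gamma_1^2\gamma_2\,\alpha^{2(t-\tau)}\alpha^{\tau}\,\drm\tau=\norm{\vc{c}}^2\norm{\vc{b}}^2\gamma_1^2\gamma_2\,\alpha^{t}\!\int_0^t\!\alpha^{\sigma}\,\drm\sigma\le\frac{\norm{\vc{c}}^2\norm{\vc{b}}^2\gamma_1^2\gamma_2}{-\ln\alpha}\,\alpha^{t}$. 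Summing the three bounds produces the required $C$, depending only on $\gamma_1,\gamma_2,\alpha,d,\norm{\vc{b}},\norm{\vc{c}},\norm{\mx{Q}}$.

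With the diagonal bound $\kernelSI(t,t)\le\kernelRnE{1}(t,t\,;C,\alpha)$ in hand, the conclusion follows at once from Theorem~\ref{thm:dominancy-and-k_RnE_DSRI} as indicated above. The main obstacle — really the only part that is not a one‑line estimate — is the bookkeeping that collapses the double convolution sum (or integral) on the diagonal into a clean $\alpha^{t}$ bound: once the summation/integration variable is chosen so that the exponent of $\mx{A}$ and the exponent of $v$ combine into a single geometric (respectively exponential) factor, the remaining sum is $\sum_{j\ge 0}\alpha^{j}=(1-\alpha)^{-1}$ (resp.\ $\int_0^\infty\alpha^\sigma\,\drm\sigma=-(\ln\alpha)^{-1}$), and everything else reduces to the triangle inequality and operator‑norm submultiplicativity.
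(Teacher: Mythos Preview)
Your proof is correct and follows essentially the same approach as the paper: specialise $\kernelSI$ to the diagonal, bound each surviving term by a multiple of $\alpha^t$ via the hypotheses $\|\Phi_t\|\le\gamma_1\alpha^t$ and $v_t^2\le\gamma_2\alpha^t$, and conclude by diagonal dominance and Theorem~\ref{thm:dominancy-and-k_RnE_DSRI}. The only cosmetic difference is that the paper packages the dominating kernel as $\kernelRnE{2}(\cdot,\cdot\,;[\lambda_1,\lambda_2]^\tr,[\alpha^2,\alpha]^\tr)+d^2\kernel_{\vcv}$ and then invokes Theorem~\ref{thm:sum_product_DSRI}, whereas you absorb everything into a single rank-$1$ exponential kernel via $\alpha^{2t}\le\alpha^t$ and $v_t^2\le\gamma_2\alpha^t$ --- a slightly cleaner bookkeeping choice that yields the same conclusion.
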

\begin{proof}
For any $t\in\Tbb$, one can show that
\begin{equation}\label{eqn:3rd-term-DT}
    \Big| \sum_{k=0}^{t-1}
    v_k^2\vc{c}\mx{A}^{t-1-k}\vc{b}\vc{b}^\tr(\mx{A}^{t-1-k})^\tr\vc{c}^\tr  
    \Big|
    \le
    \frac{\gamma \alpha^t}{\alpha(1-\alpha)},
\end{equation}
when $\Tbb=\Zbb_+$, and
\begin{equation}\label{eqn:3rd-term-CT}
    \Big|    
    \int_0^{t}
    v_{\tau}^2\vc{c}\expe^{\mx{A}{(t-\tau)}}\vc{b}
    \vc{b}^\tr(\expe^{\mx{A}{(t-\tau)}})^\tr\vc{c}^\tr\,
    \drm\tau    
    \Big|
    \le -\frac{\gamma \alpha^t}{\ln(\alpha)},
\end{equation}
when $\Tbb=\Rbb_+$, where $\gamma=\|\vc{b}\|^2\|\vc{c}\|^2\gamma_1^2\gamma_2$.
Define the kernel $\kernel:\Tbb\times\Tbb\to\Rbb$ as
\begin{equation*}
    \kernel(s,t)
    =
    \kernelRnE{2}(s,t; [\lambda_1,\lambda_2]^\tr,[\alpha^2, \alpha]^\tr) + d^2\kernel_{\vc{v}}(s,t),
\end{equation*}
for any $s,t\in\Tbb$, where $\lambda_1 := \gamma_1^2\|\mx{Q}\|\|\vc{c}\|^2$,  $\lambda_2:=\frac{\gamma}
{\alpha(1-\alpha)}$,
when $\Tbb=\Zbb_+$,
and 
$\lambda_2:=-\frac{\gamma}{\ln(\alpha)}$,
when $\Tbb = \Rbb_+$. 
According to Theorem~\ref{thm:dominancy-and-k_RnE_DSRI} and Theorem~\ref{thm:sum_product_DSRI}, we know that $\kernel$ is a DSRI kernel.
Moreover, due to \eqref{eqn:3rd-term-DT} and \eqref{eqn:3rd-term-CT}, one can easily see that $\kernelSI$ is diagonally dominated by $\kernel$. Therefore, kernel $\kernelSI$ is DSRI.
\end{proof}
\subsection{DSRI Property and Sampling} 
\label{apn:DSRI_sampling}
We say  $\sigma:\Zbb_+\to\Rbb_+$ is a \emph{proper sampling}  function if 
$\inf_{t\in\Zbb_+}\sigma(t+1)-\sigma(t)>0$. 
The following theorem says that DSRI property is preserved under proper sampling.
\begin{theorem}\label{thm:DSRI_sampling}
Let $\kernel:\Rbb_+\times\Rbb_+\to\Rbb$ be a  positive-definite kernel and $\kernel_{\sigma}:\Zbb_+\times\Zbb_+\to\Rbb$ be defined as 
\begin{equation}
    \kernel_{\sigma}(s,t) = \kernel(\sigma(s),\sigma(t)), \qquad s,t\in\Zbb_+.
\end{equation}
Define function $d_{\kernel}:\Rbb_+\to\Rbb$ as $d_{\kernel}(t)=\kernel(t,t)^{\frac12}$, for any $t$.
If $\kernel$ is a DSRI kernel with non-increasing $d_{\kernel}$, then $\kernel_{\sigma}$ is a DSRI kernel.
\end{theorem}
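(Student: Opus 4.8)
The plan is to prove the claim by a direct sum–integral comparison, showing that
$\DSRIvalue(\kernel_{\sigma}) = \sum_{s\in\Zbb_+}\kernel_{\sigma}(s,s)^{\frac12} = \sum_{s\in\Zbb_+} d_{\kernel}(\sigma(s))$
is bounded by a multiple of $\DSRIvalue(\kernel) = \int_{\Rbb_+}d_{\kernel}(t)\,\drm t$, which is finite by hypothesis. First I would record that $\kernel_{\sigma}$ is itself a positive-definite kernel: for any $m\in\Nbb$, $s_1,\ldots,s_m\in\Zbb_+$ and $a_1,\ldots,a_m\in\Rbb$, the points $\sigma(s_1),\ldots,\sigma(s_m)$ lie in $\Rbb_+$, so $\sum_{i,j}a_i\kernel_{\sigma}(s_i,s_j)a_j = \sum_{i,j}a_i\kernel(\sigma(s_i),\sigma(s_j))a_j\ge 0$ by positive-definiteness of $\kernel$; symmetry and measurability are inherited from $\kernel$.

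For the core estimate, set $\delta := \inf_{t\in\Zbb_+}\bigl(\sigma(t+1)-\sigma(t)\bigr)$, which is strictly positive because $\sigma$ is a proper sampling function; in particular $\sigma$ is strictly increasing and $\sigma(s)\to\infty$. Fix $s\ge 1$. Since $d_{\kernel}$ is non-increasing and $t\le\sigma(s)$ for every $t\in[\sigma(s-1),\sigma(s)]$, we have $d_{\kernel}(t)\ge d_{\kernel}(\sigma(s))$ on that interval, hence
\begin{equation*}
\int_{\sigma(s-1)}^{\sigma(s)} d_{\kernel}(t)\,\drm t \ \ge\ \bigl(\sigma(s)-\sigma(s-1)\bigr)\, d_{\kernel}(\sigma(s)) \ \ge\ \delta\, d_{\kernel}(\sigma(s)).
\end{equation*}
Summing over $s\ge 1$ and using that the intervals $[\sigma(s-1),\sigma(s)]$ are non-overlapping with union $[\sigma(0),\infty)$ gives
\begin{equation*}
\delta\sum_{s\ge 1} d_{\kernel}(\sigma(s)) \ \le\ \sum_{s\ge 1}\int_{\sigma(s-1)}^{\sigma(s)} d_{\kernel}(t)\,\drm t \ =\ \int_{\sigma(0)}^{\infty} d_{\kernel}(t)\,\drm t \ \le\ \DSRIvalue(\kernel) \ <\ \infty.
\end{equation*}
Adding back the single term $d_{\kernel}(\sigma(0)) = \kernel(\sigma(0),\sigma(0))^{\frac12}<\infty$, I obtain $\DSRIvalue(\kernel_{\sigma}) \le d_{\kernel}(\sigma(0)) + \delta^{-1}\DSRIvalue(\kernel) < \infty$, so $\kernel_{\sigma}\in\DSRIclass$.

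I do not expect a substantial obstacle here; the points that need care are (i) getting the monotonicity inequality in the right direction — one bounds $\int_{\sigma(s-1)}^{\sigma(s)}d_{\kernel}$ from below by the value of $d_{\kernel}$ at the \emph{right} endpoint $\sigma(s)$ — and (ii) isolating the $s=0$ contribution, which the integral comparison does not capture. The role of the \emph{proper} sampling hypothesis is precisely to supply the uniform lower bound $\delta>0$ on the step sizes; without it the sum could diverge even when the integral converges.
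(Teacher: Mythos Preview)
Your proposal is correct and follows essentially the same route as the paper: separate off the $s=0$ term, use the proper-sampling lower bound $\delta>0$ on the increments, and exploit the monotonicity of $d_{\kernel}$ to compare $\sum_{s\ge 1} d_{\kernel}(\sigma(s))$ with $\int_{\Rbb_+} d_{\kernel}(t)\,\drm t$. The paper writes this comparison as a single chain of inequalities passing through the Riemann-type sum $\sum_{t\ge 1} d_{\kernel}(\sigma(t))(\sigma(t)-\sigma(t-1))$, while you phrase it as bounding each $\int_{\sigma(s-1)}^{\sigma(s)}d_{\kernel}$ from below by $\delta\, d_{\kernel}(\sigma(s))$; these are the same estimate in slightly different clothing.
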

\begin{proof}
The positive-definiteness of $\kernel_{\sigma}$ is a direct result of the same property for $\kernel$.
Let $\delta$ be defined as 
$\delta := \inf_{t\in\Zbb_+}\sigma(t+1)-\sigma(t)>0$. 
Since $\delta>0$ and $d_{\kernel}$ is a non-increasing function, we have
\begin{equation}
\begin{split}
    \DSRIvalue&(\kernel_{\sigma})
     - \kernel_{\sigma}(0,0)^{\frac12} = 
    \sum_{t=1}^{\infty}\kernel_{\sigma}(t,t)^{\frac12}
    =
    \sum_{t=1}^{\infty}d_{\kernel}(\sigma(t))
    \!\!\!\!\!\!\!\!
    \!\!\!\!\!\!\!\!
    \\&\le
     \frac{1}{\delta}
     \sum_{t=1}^{\infty}d_{\kernel}(\sigma(t))
     \big(\sigma(t)-\sigma(t-1)\big)
    \\&\le
    \frac{1}{\delta}\int_{\Rbb_+}\kernel(s,s)^{\frac12}\drm s
    = \frac{1}{\delta} \DSRIvalue(\kernel)
    <\infty,
\end{split}    
\end{equation}
which implies that  $\kernel_{\sigma}$ is a DSRI kernel.
\end{proof}
\subsection{DSRI Property for Reparameterized Kernels} 
\label{apn:DSRI_reparameterization}
\begin{theorem}\label{thm:reparameterization_DSRI}
    Let $\kernel:\Tbb\times\Tbb\to\Rbb$ be a DSRI kernel
    and 
    $\rho:\Tbb\to\Tbb$ be a strictly increasing function, which is assumed to be differentiable with 
    $\inf_{\tau\in\Rbb_+}\frac{\drm \rho(\tau)}{\drm \tau}>0$, when $\Tbb=\Rbb_+$.
    Define $\kernelh:\Tbb\times\Tbb\to\Rbb$ as $\kernelh(s,t)=\kernel(\rho(s),\rho(t))$, for any $s,t\in\Tbb$. Then, $\kernelh$ is a DSRI kernel.
\end{theorem}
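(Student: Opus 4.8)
The plan is to control $\DSRIvalue(\kernelh)$ by $\DSRIvalue(\kernel)$ up to a constant factor, handling the discrete‑ and continuous‑time cases separately. First I would record that $\kernelh$ is a positive‑definite kernel: for any $m\in\Nbb$, $s_1,\ldots,s_m\in\Tbb$ and $a_1,\ldots,a_m\in\Rbb$,
\[
\sum_{i,j=1}^m a_i\,\kernelh(s_i,s_j)\,a_j
=
\sum_{i,j=1}^m a_i\,\kernel(\rho(s_i),\rho(s_j))\,a_j
\ge 0,
\]
since $\rho(s_1),\ldots,\rho(s_m)\in\Tbb$ and $\kernel$ is positive‑definite; symmetry and measurability of $\kernelh$ are inherited from those of $\kernel$ together with the measurability of the monotone map $\rho$. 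It then remains only to bound $\DSRIvalue(\kernelh)$.

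When $\Tbb=\Zbb_+$, strict monotonicity makes $\rho:\Zbb_+\to\Zbb_+$ injective, so $\rho(\Zbb_+)$ is an (infinite) subset of $\Zbb_+$, and $\rho$ is a bijection onto it. Reindexing the nonnegative series and discarding the omitted nonnegative terms gives
\[
\DSRIvalue(\kernelh)
=
\sum_{t\in\Zbb_+}\kernel(\rho(t),\rho(t))^{\frac12}
=
\sum_{u\in\rho(\Zbb_+)}\kernel(u,u)^{\frac12}
\le
\sum_{u\in\Zbb_+}\kernel(u,u)^{\frac12}
=
\DSRIvalue(\kernel)<\infty .
\]
Note that, unlike in Theorem~\ref{thm:DSRI_sampling}, no monotonicity of $t\mapsto\kernel(t,t)^{1/2}$ is required here, precisely because $\rho$ keeps us inside $\Zbb_+$ rather than sampling a continuous‑time kernel.

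When $\Tbb=\Rbb_+$, set $c:=\inf_{\tau\in\Rbb_+}\rho'(\tau)>0$ and $d(t):=\kernel(t,t)^{1/2}\ge 0$. For every $t$ one has $d(\rho(t))\le c^{-1}\,\rho'(t)\,d(\rho(t))$, so integrating over $[0,N]$ and applying the change of variables $u=\rho(t)$ — legitimate since $\rho$ is increasing and differentiable, mapping $[0,N]$ bijectively onto $[\rho(0),\rho(N)]$ — yields
\[
\int_0^N d(\rho(t))\,\drm t
\le
\frac1c\int_0^N \rho'(t)\,d(\rho(t))\,\drm t
=
\frac1c\int_{\rho(0)}^{\rho(N)} d(u)\,\drm u
\le
\frac1c\int_{\Rbb_+} d(u)\,\drm u .
\]
Letting $N\to\infty$ by monotone convergence gives $\DSRIvalue(\kernelh)\le c^{-1}\DSRIvalue(\kernel)<\infty$, hence $\kernelh\in\DSRIclass$.

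The only step requiring genuine care is the continuous‑time one: the hypothesis $\inf_{\tau}\rho'(\tau)>0$ is exactly what bounds the Jacobian $1/\rho'$ from above (an infimum equal to $0$ would let the "stretch" be arbitrarily weak and make the bound fail), and one should observe that $\rho(t)\ge\rho(0)+ct\to\infty$, so the substituted intervals $[\rho(0),\rho(N)]$ exhaust $[\rho(0),\infty)\subseteq\Rbb_+$ and nothing is lost by enlarging the domain to all of $\Rbb_+$. Everything else — positive‑definiteness and the $\Zbb_+$ case — is essentially immediate.
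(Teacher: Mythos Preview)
Your proof is correct and follows essentially the same route as the paper's: positive-definiteness is inherited trivially, the $\Zbb_+$ case is handled by reindexing over the image $\rho(\Zbb_+)$ and discarding nonnegative terms, and the $\Rbb_+$ case is the change of variables $u=\rho(t)$ with the Jacobian bounded via $\inf_\tau\rho'(\tau)>0$. Your version is, if anything, slightly more careful in justifying the substitution on finite intervals and passing to the limit by monotone convergence, and in noting $\rho(t)\to\infty$; the paper performs the substitution directly on $\Rbb_+$ and bounds $1/\rho'(\rho^{-1}(s))$ afterwards, but the content is the same.
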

\begin{proof}
The positive-definiteness of $\kernelh$ is directly concluded from the same property of $\kernel$. 
The properties of $\rho$ imply that it has a well-defined inverse function $\rho^{-1}:\rho(\Tbb)\to\Tbb$, which is a strictly increasing map.
Therefore, for any $s\in\rho(\Tbb)$, there exists a unique $t\in\Tbb$ such that $s=\rho(t)$. Accordingly, for the case of $\Tbb=\Zbb_+$, we have
\begin{equation}
\begin{split}	
	\DSRIvalue(\kernelh)
	&=
	\sum_{t\in\Zbb_+}\kernelh(t,t)^{\frac12}
	=
	\sum_{t\in\Zbb_+}\kernel(\rho(t),\rho(t))^{\frac12}
	\\&=
	\sum_{s=\rho(0)}^{\infty}
	\kernel(s,s)^{\frac12}
	\le
	\sum_{s\in\Zbb_+}\kernel(s,s)^{\frac12}
	<\infty,
\end{split}	
\end{equation} 
which implies that $\kernelh$ is DSRI.
Similarly, for the case of $\Tbb=\Rbb_+$, we have
\begin{equation*}
	\begin{split}
		\!\!\!\!\!\!
		\DSRIvalue&(\kernelh)
		=
		\int_{\Rbb_+}\kernelh(t,t)^{\frac12}\drm t
		=
		\int_{\Rbb_+}\kernel(\rho(t),\rho(t))^{\frac12} \drm t
		\!\!\!\!
		\\&\!\!\!\!\!\!=\!\!
		\int_{\rho(0)}^{\infty}\!
		\kernel(s,s)^{\frac12}
		\frac{1}
		{\frac{\drm \rho}{\drm \tau}(\rho^{-1}(s))} \drm s
		\le 
		\frac{\int_{s\in\Rbb_+}\!\!\kernel(s,s)^{\frac12} \drm s}
		{\inf_{\tau\in\Rbb_+}\frac{\drm \rho(\tau)}{\drm \tau}}
		\!\!\!\!
		\\&\!\!\!\!\!\!\le 
		\frac{\DSRIvalue(\kernel)}
		{\inf_{\tau\in\Rbb_+}\frac{\drm \rho(\tau)}{\drm \tau}}
		<\infty.
	\end{split}
\end{equation*}
This concludes the proof. 
\end{proof}
\subsection{Proof of Theorem~\ref{thm:integrable_not_DSRI}} 
\label{apn:proof:thm:integrable_not_DSRI}
Let $\Tbb=\Zbb_+$ and define a symmetric function $\kernel:\Zbb_+\times\Zbb_+\to\Rbb$ such that, for any $s,t\in\Zbb_+$, we have 
\begin{equation}\label{eqn:pf:integrable_not_DSRI_pf:eqn_01}
	\kernel(s,t) = 
	\begin{cases}
		\frac{1}{(1+s)^2},	& \text{ if } s=t,\\
		0,					& \text{ if } s\ne t.\\
	\end{cases}
\end{equation}
For any $t_1,\ldots,t_n\in\Zbb_+$ and any $a_1,\ldots,a_n\in\Rbb$, one can see that
\begin{equation}\label{eqn:pf:integrable_not_DSRI_pf:eqn_02}
	\begin{split}
		\sum_{i=1}^n\sum_{j=1}^n&a_ia_j\kernel(t_i,t_j)=\\
		&\sum_{\substack{0\le t\le\bar{t}\\\Ical_t\ne \emptyset}}
		\frac{1}{(1+t)^2}
		\Big(\sum_{i\in\Ical_t}a_i\Big)^2
		\ge 0,
	\end{split}
\end{equation}
where $\bar{t}=\max\{t_1,\ldots,t_n\}$ and $\Ical_t=\big\{i\in\{1,\ldots,n\}\big|t_i=t\big\}$, for $t=0,\ldots,\bar{t}$.
This implies that $\kernel$ is a  positive-definite kernel. Moreover, we have
\begin{equation}
	\sum_{s\in\Zbb_+}\sum_{t\in\Zbb_+}\kernel(s,t)=\sum_{s\in\Zbb_+}\frac{1}{(1+s)^2}=\frac{1}{6}\pi^2< \infty,
\end{equation}
and
\begin{equation}
	\DSRIvalue(\kernel) = \sum_{s\in\Zbb_+}\kernel(s,s)^{\frac12}=\sum_{s\in\Zbb_+}\frac{1}{1+s}=\infty.
\end{equation}
Therefore, $\kernel$ is an integrable  positive-definite kernel which is not DSRI. Let $\Tbb=\Rbb_+$ and function $f:\Rbb_+\to\Rbb_+$ be defined as
\begin{equation}
	f(t) = 
	\begin{cases}
		\cos(\pi t),	& \text{ if } t\in[0,\frac12],\\
		0,				& \text{ if } t\ge \frac12,\\
	\end{cases}
\end{equation}
for any $t\in\Rbb_+$. Note that $f$ is a continuous and positive function. Define  $\kernelh:\Zbb_+\times\Zbb_+\to\Rbb$ such that, for any $s,t\in\Zbb_+$, we have $\kernelh(s,t)=\kernel(\floor{s},\floor{t})g(s)g(t)$,
where $\kernel$ is introduced in \eqref{eqn:pf:integrable_not_DSRI_pf:eqn_01},
and function
$g:\Rbb_+\to\Rbb_+$ is defined as $g(s)=f(s-\floor{s})$, for any $s\in\Rbb_+$.
One can easily see that $\kernelh$ is continuous. 
Moreover, for any $t_1,\ldots,t_n\in\Zbb_+$ and any  $b_1,\ldots,b_n\in\Rbb$, we have
\begin{equation*} 
	\begin{split}
		\sum_{i,j=1}^n b_ib_j\kernelh(t_i,t_j)
		&=
		\sum_{i,j=1}^n b_ib_j\kernel(\floor{t_i},\floor{t_j})g(t_i)g(t_j)
		\\&=
		\sum_{i,j=1}^n a_ia_j\kernel(\floor{t_i},\floor{t_j}),
	\end{split}
\end{equation*}
where $a_i$ is defined as $a_i:=b_i g(t_i)$, for $i=1,\ldots,n$. 
Therefore, due to \eqref{eqn:pf:integrable_not_DSRI_pf:eqn_02}, we have 
$\sum_{i,j=1}^n b_ib_j\kernelh(t_i,t_j)\ge 0$,
which implies that $\kernelh$ is a positive-definite kernel. We know that
\begin{equation}\!\!\!\!\!
	\begin{split}
		\ \ &\!\!\!\!\!
		\int_{\Rbb_+}\!\int_{\Rbb_+}\!|\kernelh(s,t)|\drm s\drm t
		=
		\int_{\Rbb_+}\!\int_{\Rbb_+}\!\kernel(\floor{s},\floor{t})g(s)g(t)\drm s\drm t \!\!\!\!\!\!\!\!\!\!\!\!\!\!\!\!\!
		\\&\le
		\int_{\Rbb_+}\int_{\Rbb_+}\kernel(\floor{s},\floor{t})\drm s\drm t
		\\&=
		\sum_{n\in\Zbb_+}
		\int_{n}^{n+1}\int_{n}^{n+1}\kernel(\floor{s},\floor{t})\drm s\drm t
		\\&=
		\sum_{n\in\Zbb_+}
		\kernel(n,n)
		=
		\sum_{n\in\Zbb_+}
		\frac{1}{(n+1)^2} =\frac16\pi^2,
	\end{split}	
\end{equation} 
which implies that $\kernelh$ is integrable.
On the other hand, we have
\begin{equation}
	\begin{split}
		\DSRIvalue(\kernelh)
		&=
		\int_{\Rbb_+}\kernel(\floor{s},\floor{s})^\frac12 g(s)\drm s
		\\&
		=
		\sum_{n\in\Zbb_+}
		\int_{n}^{n+1}\kernel(\floor{s},\floor{s})^\frac12 g(s)\drm s,	
		\end{split}	
\end{equation}
and thus, from definition of $\kernel$, it follows that
\begin{equation}
	\begin{split}
		\DSRIvalue(\kernelh)&=
		\sum_{n\in\Zbb_+}
		\kernel(n,n)^\frac12\int_{0}^{1}f(s)\drm s
		\\&=
		\Big(\sum_{n\in\Zbb_+}\frac{1}{n+1}\Big)\int_{0}^{1}f(s)\drm s = \infty.
	\end{split}	
\end{equation} 
Therefore, $\kernelh$ is not a DSRI kernel. 
\qed\vspace{-2mm}
\subsection{Proof of Theorem~\ref{thm:continuity_inherits}} 
\label{apn:proof:thm:continuity_inherits}
	The first part of the theorem is due to Theorem~\ref{thm:DSRI_stable}.
	For the second part of the theorem, we only provide the proof for the case of $\Tbb=\Rbb_+$. The proof for $\Tbb=\Zbb_+$ is similar. 
	
	Let $\vcg=(g_s)_{s\in\Rbb_+}$. Due to the reproducing property,  we have $g_s=\inner{\vcg}{\kernel_s}_{\Hk}$ and $\|\kernel_s\|_{\Hk}^2=\inner{\kernel_s}{\kernel_s}_{\Hk}=\kernel(s,s)$, for any $s\in\Rbb_+$.
	Subsequently, from the Cauchy-Schwartz inequality, it follows that
	\begin{equation*}
		|g_s| = |\inner{\vcg}{\kernel_s}_{\Hk}|
		\le 
		\|\vcg\|_{\Hk} \!\ \|\kernel_s\|_{\Hk} 
		= 
		\|\vcg\|_{\Hk} \!\ \kernel(s,s)^{\frac12}.
	\end{equation*} 
	Accordingly, since $\|\vcg\|_1=\int_{\Rbb_+}|g_s|\drm s$, we have
	\begin{equation}\label{eqn:pf_continuity_inherits_thm:eqn_1}
	\begin{split}
		\|\vcg\|_1
		\le \int_{\Rbb_+}\|\vcg\|_{\Hk} \!\ \kernel(s,s)^{\frac12}\drm s 
		=\DSRIvalue(\kernel)\|\vcg\|_{\Hk}.
	\end{split}	
	\end{equation}
	On the other hand, from the definition of operator norm, it follows that
	\begin{equation}\label{eqn:pf_continuity_inherits_thm:eqn_2}
		\|\mxL(\vcg)\|_{\Banach}
		\le
		\|\mxL\|_{\Lcal(\Lscrone,\Banach)}
		\|\vcg\|_1.
	\end{equation}
	Considering \eqref{eqn:pf_continuity_inherits_thm:eqn_1},  we know that
	\begin{equation}\label{eqn:pf_continuity_inherits_thm:eqn_3}
		\|\mxL(\vcg)\|_{\Banach}
		\le 
		\|\mxL\|_{\Lcal(\Lscrone,\Banach)} \DSRIvalue(\kernel)\|\vcg\|_{\Hk}.
	\end{equation}
	Therefore, due to \eqref{eqn:pf_continuity_inherits_thm:eqn_3} and the definition of operator norm, we have
	\begin{equation}\label{eqn:pf_continuity_inherits_thm:eqn_4}
		\begin{split}
			\|\mxL&\|_{\Lcal(\Hk,\Banach)}
			=
			\supOp_{\substack{\vcg\in\Hk \\ \|\vcg\|_{\Hk}\le 1}}
			\|\mxL(\vcg)\|_{\Banach}
			\\&\le 
			\supOp_{\substack{\vcg\in\Hk \\ \|\vcg\|_{\Hk}\le 1}}
			\Big( \|\mxL\|_{\Lcal(\Lscrone,\Banach)} \DSRIvalue(\kernel)\|\vcg\|_{\Hk}\Big)
			\\&=	
			\|\mxL\|_{\Lcal(\Lscrone,\Banach)} \DSRIvalue(\kernel),
		\end{split}	
	\end{equation}
	which implies \eqref{eqn:continuity_inherits_thm:norm_ineq} and concludes the proof.
\qed\vspace{-2mm}
\subsection{Proof of Theorem~\ref{thm:Lv_continuity}} 
\label{apn:proof:thm:Lv_continuity}
By an abuse of notation, we define $\mxL:\Lscrone\to\Banach$ similarly to \eqref{eqn:Lv_operator}. 
According to \cite[Theorem 8.2]{mikusinski1978Bochner}, $(g_tv_t)_{t\in\Tbb}$ is a Bochner integrable function, for any $\vcg=(g_t)_{t\in\Tbb}\in\Lscrone$.  This implies that
$\mxL:\Lscrone\to\Banach$ is a well-defined linear operator. Furthermore, we have 
\begin{equation}
    \|\mxL(\vcg)\|_{\Banach} \le 
    \|\vcg\|_1 \big(
    \esssup_{t\in\Tbb}\|v_t\|_{\Banach}\big),
\end{equation}
for any $\vcg=(g_t)_{t\in\Tbb}\in\Lscrone$.
Therefore, one can see
\begin{equation*}
		\|\mxL\|_{\Lcal(\Lscrone,\Banach)}
		=
		\supOp_{\substack{\vcg\in\Lscrone \\ \|\vcg\|_{\Lscrone}\le 1}}
		 \|\mxL(\vcg)\|_{\Banach}
		\le 
		\esssup_{t\in\Tbb}\|v_t\|_{\Banach}, 
\end{equation*}
i.e., $\mxL:\Lscrone\to\Banach$ is a continuous linear operator. Thus, the claim follows directly from Theorem~\ref{thm:continuity_inherits}.
\qed\vspace{-2mm}
\subsection{Proof of Lemma~\ref{lem:GP_stable}} 
\label{apn:proof:lem:GP_stable}

We prove the lemma for the case of $\Tbb=\Rbb_+$. The proof for $\Tbb=\Zbb_+$ follows the same line of argument.
Note that we have
\begin{equation*}
    \big\{\omega\in\Omega 
    \,\big|\, 
    \|\vcg(\omega)\|_1<\infty
    \big\}
    =
    \CupOp_{r=1}^{\infty}
    \big\{\omega\in\Omega
    \,\big|\,
    \|\vcg(\omega)\|_1\le r
    \big\}.
\end{equation*}
Accordingly, from the sub-additivity property of $\Pbb$, we know that 
\begin{equation*} 
    0<\Pbb\big[\|\vcg\|_1<\infty\big]\le \sum_{r=1}^{\infty}\,\Pbb\big[\|\vcg\|_1\le r\big].
\end{equation*}
Therefore, there exists $r\in\Nbb$ such that, for event $A$ defined as $A:=\big\{\omega\in\Omega\,\big|\,\|\vcg(\omega)\|_1\le r\big\}$, we have $\gamma:=\Pbb[A]>0$.
Accordingly, due to the properties of indicator functions, the definition of $A$, and the Tonelli's Theorem \cite{stein2009real},  we can see that
\begin{equation}\label{eqn:pf_eq3}
\begin{split}
    \gamma
    =\Pbb[A]  
    &=
    \frac1r\Ebb\big[r\onefun_{A}\big]
    \\&\ge 
    \frac1r
    \Ebb\big[\|\vcg\|_1\onefun_{A}\big]
    =
    \int_{\Rbb_+}
    \Ebb\big[|g_t|\onefun_{A}\big] \drm t.
\end{split}
\end{equation}
With respect to each $t\in\Rbb_+$, define event $B_t$ as
\begin{equation}\label{eqn:pf_eq4}
    B_t
    = 
    \big\{\omega\in\Omega 
    \,\big|\,   
    |g_t(\omega)|\le\epsilon\kernel(t,t)^{\frac12} \big\},  
\end{equation}
where $\epsilon$ is the positive real number characterized as $\epsilon:=\Phi^{-1}(\frac12 \gamma)$.
For each $t\in\Rbb_+$, we have $A\supseteq A\cap B_t^{\mathrm{c}}$.  
Therefore, from \eqref{eqn:pf_eq3} and \eqref{eqn:pf_eq4}, it follows that
\begin{equation}\label{eqn:pf_eq5}
\begin{split}
    \gamma\ 
    &\ge    
    \int_{\Rbb_+}
    \Ebb\big[|g_t|\onefun_{A}\big] \drm t
    \\&\ge
    \int_{\Rbb_+}
    \Ebb\big[|g_t|\onefun_{A\cap B_t^{\mathrm{c}}}\big] \drm t
    \\&\ge
    \int_{\Rbb_+}    
    \epsilon\kernel(t,t)^{\frac12}
    \Ebb\big[\onefun_{A\cap B_t^{\mathrm{c}}}\big] \drm t.
\end{split}
\end{equation}
Moreover, for each $t\in\Rbb_+$, we have 
$\onefun_{A\cap B_t^{\mathrm{c}}} \ge \onefun_{A}-\onefun_{B_t}$, which implies that 
$\Ebb\big[\onefun_{A\cap B_t^{\mathrm{c}}}\big]
\ge \Pbb[A]-\Pbb[B_t]$.
Subsequently, from \eqref{eqn:pf_eq3} and \eqref{eqn:pf_eq5}, we can see that
\begin{equation}\label{eqn:pf_eq6}
    \gamma
    \ge
    \int_{\Rbb_+} \!\!    
    \epsilon\kernel(t,t)^{\frac12}
    \big(\Pbb[A]-\Pbb[B_t]\big) \drm t.
\end{equation}
We know that $g_t\sim\Ncal(0,\kernel(t,t)^{\frac12})$,  $t\in\Rbb_+$. Accordingly, from the definition of sets $A$ and $B_t$, we have
\begin{equation*}\label{eqn:pf_eq7}
\begin{split}
    \epsilon\kernel(t,t)^{\frac12}\big(\Pbb[A]-\Pbb[B_t]\big) 
    &=
    \epsilon\kernel(t,t)^{\frac12}\big(\Pbb[A]-\Phi(\epsilon)\big)     
    \\&=
    \frac12\epsilon   \kernel(t,t)^{\frac12}\gamma.    
\end{split}
\end{equation*}
Therefore, \eqref{eqn:pf_eq6} implies that
\begin{equation}\label{eqn:pf_eq8}
    \gamma
    \ge
    \frac12\epsilon \gamma
    \int_{\Rbb_+} \!\!    
       \kernel(t,t)^{\frac12} \drm t 
    = 
    \frac12\epsilon \gamma
    \DSRIvalue(\kernel),
\end{equation}
and subsequently, we have $\DSRIvalue(\kernel)<\infty$, and $\kernel$ is a DSRI kernel.
Furthermore, from Lemma~\ref{lem:DSRI_GP_stable}, it follows that 
$\Pbb[\|\vcg\|_1<\infty]=1$, which concludes the proof.
\qed\vspace{-2mm}
\subsection{Proof of Theorem~\ref{thm:GP_stable}} 
\label{apn:proof:thm:GP_stable}
Note that $\vcg\sim\GP(\vcm,\kernel)$ if and only if
$\vcg-\vcm\sim\GP(\zero,\kernel)$.
Since $\vcm$ is a stable impulse response,  the stability of $\vcg$ is equivalent to the stability of $\vcg-\vcm$. 
Accordingly, the claim follows from Lemma~\ref{lem:DSRI_GP_stable} and Lemma~\ref{lem:GP_stable}.
\qed\vspace{-2mm}

\bibliographystyle{IEEEtran}        

{
\scriptsize{ 
	\bibliography{mainbib_DSRI}    
}}								
\end{document}